\documentclass[a4paper,fleqn,11pt]{article}

\usepackage{amsmath}
\usepackage{amsthm}
\usepackage{amssymb}
\usepackage{accents}

\usepackage[a4paper,top=3cm, bottom=3cm, left=3cm, right=3cm]{geometry}
\usepackage[shortlabels]{enumitem}
\usepackage[pdftex]{graphicx}
\usepackage{subcaption}
\usepackage[pdftex,ocgcolorlinks,pagebackref=false]{hyperref}
\usepackage[affil-it]{authblk}

\bibliographystyle{alphaurl}

\theoremstyle{plain}
\newtheorem{theorem}{Theorem}[section]
\newtheorem{lemma}[theorem]{Lemma}
\newtheorem{proposition}[theorem]{Proposition}
\newtheorem{corollary}[theorem]{Corollary}
\newtheorem{remark}[theorem]{Remark}

\theoremstyle{definition}
\newtheorem{definition}[theorem]{Definition}

\newcommand{\asymptoticle}{\lesssim} % asymptotic preorder
\newcommand{\asymptoticge}{\gtrsim} % asymptotic preorder
\newcommand{\norm}[2][]{\left\|#2\right\|_{#1}}
\newcommand{\ket}[1]{\left|#1\right\rangle}
\newcommand{\bra}[1]{\left\langle #1\right|}
\newcommand{\ketbra}[2]{\left|#1\middle\rangle\!\middle\langle#2\right|}

\newcommand{\setbuild}[2]{\left\{#1\middle|#2\right\}}
\DeclareMathOperator{\boundeds}{\mathcal{B}}
\DeclareMathOperator{\Tr}{Tr}
\DeclareMathOperator{\rank}{rk}
\DeclareMathOperator{\support}{supp}
\DeclareMathOperator{\spectrum}{spec}
\DeclareMathOperator{\id}{id}
\newcommand{\entropy}{H}
\newcommand{\relativeentropy}[3][]{\mathop{D_{#1}}\mathopen{}\left(#2\middle\|#3\right)\mathclose{}}

\newcommand{\preorderle}{\preccurlyeq}
\newcommand{\preorderge}{\succcurlyeq}
\DeclareMathOperator{\ev}{ev}
\newcommand{\sandwicheddivergence}[3]{\widetilde{D}_{#1}(#2\|#3)}
\newcommand{\thesandwichedquasientropy}[1]{\widetilde{Q}_{#1}}
\newcommand{\sandwichedquasientropy}[3]{\thesandwichedquasientropy{#1}(#2\|#3)}

\newcommand{\dichotomies}{\mathcal{D}}
\newcommand{\cdichotomies}{\mathcal{D}_\textnormal{c}}
\newcommand{\pinching}[1]{\mathcal{P}_{#1}}

\title{The semiring of dichotomies and asymptotic relative submajorization}
\author[1]{Christopher Perry}
\author[2,3]{P\'eter Vrana}
\author[1,4]{Albert~H. Werner}
\affil[1]{QMATH, Department of Mathematical Sciences, University of Copenhagen, Universitetsparken~5, 2100~Copenhagen, Denmark}
\affil[2]{Institute of Mathematics, Budapest University of Technology and Economics, Egry~J\'ozsef u.~1., Budapest, 1111 Hungary}
\affil[3]{MTA-BME Lend\"ulet Quantum Information Theory Research Group}
\affil[4]{NBIA, Niels Bohr Institute, University of Copenhagen, Blegdamsvej~17, 2100~Copenhagen, Denmark}

\begin{document}
\maketitle

\begin{abstract}
We study quantum dichotomies and the resource theory of asymmetric distinguishability using a generalization of Strassen's theorem on preordered semirings. We find that an asymptotic variant of relative submajorization, defined on unnormalized dichotomies, is characterized by real-valued monotones that are multiplicative under the tensor product and additive under the direct sum. These strong constraints allow us to classify and explicitly describe all such monotones, leading to a rate formula expressed as an optimization involving sandwiched R\'enyi divergences. As an application we give a new derivation of the strong converse error exponent in quantum hypothesis testing.
\end{abstract}

\section{Introduction}

The resource theoretic study of hypothesis testing was initated by Matsumoto in \cite{matsumoto2010reverse}. In this resource theory the objects we are comparing are pairs of quantum states -- also called dichotomies -- on a Hilbert space, representing two alternative states a system might be in, and the allowed operations are arbitrary quantum channels, simultaneously applied to both states. From this point of view, hypothesis testing is about distilling (i.e. transforming a given pair into) standard pairs, but it is meaningful to consider more general transformations between arbitrary dichotomies, including the reverse process. Building upon this information-theoretic approach, recent work \cite{buscemi2019information,wang2019resource} found the optimal rates of asymptotic pair transformations, when one demands exact transformation of the second component and only approximate of the first one, with asymptotically vanishing error. There it is also found that the convergence of the error to $0$ and $1$ on either side of this rate is exponentially fast.

When studying resource theories with a tensor-product structure (e.g. in the sense of \cite[Definition 2.]{chitambar2019quantum}), one is often interested in how composition of resources interacts with the preorder induced by the free operations. This structure allows one to define rates of exact asymptotic transformations by comparing large tensor powers of the initial and final states (or more general objects). Under fairly general conditions such rates are characterized by additive real-valued monotones \cite{fritz2017resource}. Interestingly, a result of this type has appeared much earlier, in Strassen's work on the asymptotic comparison of large powers tensors \cite{strassen1988asymptotic}. In that case there is an additional operation (direct sum) and, remarkably, one need only consider monotones that respect both operations, i.e. monotone semiring homomorphisms into the nonnegative reals. The collection of such monotone homomorphisms is called the asymptotic spectrum of the preordered semiring, and assuming a boundedness condition they characterize an asymptotic relaxation of the preorder. Recent works have shown that Strassen's theory of asymptotic spectra can be applied to resource theories relevant to classical and quantum information theory, leading to powerful characterizations of various zero-error capacities \cite{zuiddam2019asymptotic,li2018quantum} as well as strong converse exponents for entanglement transformations via local operations and classical communication \cite{jensen2019asymptotic}.

In this paper we employ the method of asymptotic spectra to study quantum dichotomies. We start with the observation that it is possible to turn the set of (equivalence classes of) unnormalized dichotomies into a semiring in such a way that the preorder given by relative submajorization \cite{renes2016relative} is compatible with the direct sum and the tensor product. While the boundedness condition of Strassen's theorem is not satisfied in the resulting preordered semiring, a recent generalization of that theorem \cite{vrana2020generalization} does apply. The asymptotic preorder captures probabilistic asymptotic transformations in the strong converse regime, and encodes the strong converse exponents for pair transformations (with a restricted type of approximation in the first component, as allowed by relative submajorization). The generalization of Strassen's theorem allows one to characterize the asymptotic preorder in terms of the asymptotic spectrum, but does so in a non-constructive way. Nevertheless, we are able to determine explicitly the set of monotone semiring homomorphisms on our preordered semiring, identifying them as sandwiched R\'enyi quasi-entropies \cite{muller2013quantum,wilde2014strong} of orders $\alpha\ge 1$. This result can be directly translated into explicit formulas for the strong converse exponents involving an optimization over a single parameter $\alpha$. The strong converse exponent for hypothesis testing \cite{mosonyi2015quantum} emerges as a special case.

One of the main results of \cite{matsumoto2010reverse} is an axiomatic characterization of the Umegaki relative entropy \cite{umegaki1962conditional}. In an analogous way, our results lead to a new characterization of the sandwiched R\'enyi quasi-entropies of order $\alpha\ge 1$: suppose that a quantity $f(\rho,\sigma)$, depending on pairs of positive semidefinite operators $\rho,\sigma$ on arbitrary finite dimensional Hilbert spaces, satisfies the following properties:
\begin{enumerate}[(i)]
\item $f(\rho_1\otimes\rho_2,\sigma_1\otimes\sigma_2)=f(\rho_1,\sigma_1)f(\rho_2,\sigma_2)$ (multiplicativity)
\item $f(\rho_1\oplus\rho_2,\sigma_1\oplus\sigma_2)=f(\rho_1,\sigma_1)+f(\rho_2,\sigma_2)$ (additivity)
\item $f(I_n,I_n)=n$ (normalization)
\item $f(T(\rho),T(\sigma))\le f(\rho,\sigma)$ when $T$ is a completely positive trace-nonincreasing map (data processing inequality)
\item $f$ is increasing in the first and decreasing in the second argument (with respect to the semidefinite partial order). (monotonicity)
\end{enumerate}
Then $f(\rho,\sigma)=\sandwichedquasientropy{\alpha}{\rho}{\sigma}=\Tr\left(\sigma^{\frac{1-\alpha}{2\alpha}}\rho\sigma^{\frac{1-\alpha}{2\alpha}}\right)^\alpha$ for some $\alpha\ge 1$.

We point out that our work does not rely on existing proofs of the data processing inequality for the sandwiched R\'enyi divergences of order $\alpha>1$, but rather provides a new one. Since the components of this proof are somewhat scattered around, we briefly summarize the high-level structure for readers who wish to focus on this aspect of our work. First we show that, when restricted to classical (commuting) pairs, $\thesandwichedquasientropy{\alpha}$ with $\alpha\ge 1$ are monotone homomorphisms (Proposition~\ref{prop:cmonotones}). Then we show that, still restricting to classical pairs, there are no other monotone homomorphisms (Proposition~\ref{prop:cmultiplicative} and Remark~\ref{rem:relativesubmajorization}). We show that, informally, quantum dichotomies are bounded from below and above by classical dichotomies, which by general considerations implies that every monotone homomorphism on classical pairs has at least one (monotone, homomorphic) extension to quantum pairs (Corollary~\ref{cor:surjective}). Finally, with the help of the pinching inequality we show that the restriction of $\thesandwichedquasientropy{\alpha}$ to classical pairs has at most one extension to quantum pairs, namely $\thesandwichedquasientropy{\alpha}$ (Thereom~\ref{thm:qmonotones}).

The remainder of this paper is structured as follows. In Section~\ref{sec:preliminaries} we collect definitions and facts related to preordered semirings and the pinching maps to be used in later sections. In Section~\ref{sec:dichotomysemiring} we define the preordered semiring of dichotomies and prove that it satisfies the technical conditions required by the generalization of Strassen's theorem. In Section~\ref{sec:spectralpoints} we complete the classification of real-valued monotone semiring homomorphisms. Section~\ref{sec:rates} translates our results on the asymptotic spectrum to various settings, in the context of pair transformations, hypothesis testing and quantum thermodynamics.

\section{Preliminaries}\label{sec:preliminaries}

\subsection{Preordered semirings}

In this section we collect definitions and results related to preordered semirings, asymptotic preorders and asymptotic spectra. A (commutative) semiring is a set $S$ equipped with two binary operations $+$, $\cdot$ that are both commutative and associative, such that neutral elements $0$ and $1$ exist, $0a=0$ and $(a+b)c=ac+bc$ for all $a,b,c\in S$. A preordered semiring is a commutative semiring with a preorder $\preorderle$ such that $a\preorderle b$ implies $a+c\preorderle b+c$ and $ac\preorderle bc$ for all $a,b,c\in S$ and such that $0\preorderle 1$. Note in particular that $0\preorderle a$ for all $a\in S$. We say that $S$ is of \emph{polynomial growth} \cite{fritz2018generalization} if there is an element $u$ such that for every nonzero $x\in S$ there is a $k\in\mathbb{N}$ such that $x\preorderle u^k$ and $1\preorderle u^kx$. Any such $u$ is called \emph{power universal}.

From now on $u$ will denote an arbitrary but fixed power universal element. For our purposes the precise choice does not matter. In particular, in the following definition we define an asymptotic relaxation of the preorder \cite[Definition 2.3]{vrana2020generalization}. While the definition involves the element $u$, the resulting relation would be the same if we chose a different one. In the context of ordered commutative monoids, the relaxed preorder is closely related to regularized rates \cite[8.16. Definition]{fritz2017resource}.
\begin{definition}[asymptotic preorder]\label{def:asymptoticpreorder}
Let $a,b\in S$. We say that $a$ is asymptotically less than $b$ (notation: $a\asymptoticle b$) if there exists a sequence $(k_n)_{n\in\mathbb{N}}$ of nonnegative integers such that
\begin{align}
\lim_{n\to\infty}\frac{k_n}{n} & = 0 \label{eq:asymptoticpreorderdeflimit}
\intertext{(i.e. $k_n$ is sublinear) and}
\forall n\in\mathbb{N}:a^n\preorderle u^{k_n}b^n. \label{eq:asymptoticpreorderdefinequality}
\end{align}
\end{definition}

If $S_1$ and $S_2$ are preordered semirings, then a map $\varphi:S_1\to S_2$ is a monotone semiring homomorphism if it satisfies $\varphi(0)=0$, $\varphi(1)=1$, $\varphi(a+b)=\varphi(a)+\varphi(b)$, $\varphi(ab)=\varphi(a)\varphi(b)$ and is compatible with the preorders, i.e. $a\preorderle_1 b\implies\varphi(a)\preorderle_2(b)$ for all $a,b\in S_1$. The asymptotic spectrum of a preordered semiring $S$ is the set of monotone semiring homomorphisms into the semiring of nonnegative reals (with its usual addition, multiplication and order), i.e. maps $f:S\to\mathbb{R}_{\ge 0}$ satisfying for all $x,y\in S$
\begin{enumerate}[(i)]
\item $f(x+y)=f(x)+f(y)$
\item $f(xy)=f(x)f(y)$
\item $x\preorderle y\implies f(x)\le f(y)$
\item $f(1)=1$.
\end{enumerate}
We denote the asymptotic spectrum by $\Delta(S,\preorderle)$. The elements of the asymptotic spectrum are called spectral points. For every $s\in S$ the evaluation map $\ev_s:\Delta(S,\preorderle)\to\mathbb{R}_{\ge 0}$ is defined as $\ev_s(f)=f(s)$.

It is clear that for any $f\in\Delta(S,\preorderle)$ and $a,b\in S$ such that $a\asymptoticle b$ we have $f(a)\le f(b)$. This follows by applying $f$ to the inequalities \eqref{eq:asymptoticpreorderdefinequality}, taking roots and the limit $n\to\infty$ using \eqref{eq:asymptoticpreorderdeflimit}. The converse in general does not hold but it turns out that under additional hypotheses it does:
\begin{theorem}[{\cite[Theorem 1.2.]{vrana2020generalization}}]\label{thm:spectrumpreorder}
Let $(S,\preorderle)$ be a preordered semiring of polynomial growth such that the canonical map $\mathbb{N}\hookrightarrow S$ is an order embedding, and let $M\subseteq S$ and $S_0$ the subsemiring generated by $M$. Suppose that
\begin{enumerate}[({M}1)]
\item\label{it:invertibleuptobounded} for all $s\in S\setminus\{0\}$ there exist $m\in M$ and $n\in\mathbb{N}$ such that $1\preorderle nms$ and $ms\preorderle n$
\item\label{it:boundedev} for all $m\in M$ such that $\ev_m:\Delta(S_0)\to\mathbb{R}_{\ge 0}$ is bounded there is an $n\in\mathbb{N}$ such that $m\preorderle n$.
\end{enumerate}
Then for every $x,y\in S$ we have
\begin{equation}
x\asymptoticge y\iff\forall f\in\Delta(S,\preorderle):f(x)\ge f(y).
\end{equation}
\end{theorem}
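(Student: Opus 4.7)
The forward implication is immediate from the paragraph preceding the theorem. For the reverse, I would argue by contrapositive: given $x, y \in S$ with $x \not\asymptoticge y$, construct a monotone semiring homomorphism $f \in \Delta(S, \preorderle)$ with $f(y) > f(x)$.

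The strategy is a reduction to Strassen's classical Positivstellensatz on the subsemiring $S_0$ generated by $M$. Condition (M2) is precisely the classical boundedness hypothesis applied to $S_0$, so the standard machinery is available there: a Zorn's lemma construction of a maximal semiring preorder that extends $\asymptoticle$ while still separating two designated elements, followed by quotienting to an Archimedean totally ordered semifield and embedding uniquely into $\mathbb{R}_{\ge 0}$, yields any required separating $f_0 \in \Delta(S_0, \preorderle)$. Separately, (M1) allows the failure $x \not\asymptoticge y$ to be transferred to $S_0$ by multiplying both sides by a suitable common element of $S_0$ (obtained by combining the multipliers from (M1) applied to $x$ and to $y$), giving a non-asymptotic comparison $mx$ versus $my$ inside $S_0$ to which the Positivstellensatz applies.

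The most delicate step is to extend the separating $f_0$ obtained in this way to $f \in \Delta(S, \preorderle)$. The idea behind (M1) is that each $s \in S$ is controlled from above and below by a suitable $m \in M$: the inequalities $1 \preorderle nms$ and $ms \preorderle n$ force, for any putative extension $f$, both $f(m) > 0$ and $1/(nf(m)) \le f(s) \le n/f(m)$. I would argue that this bracketing, combined with the required multiplicativity $f(m's) = f(m') f(s)$ as $m'$ varies over $M$, pins down a unique compatible value of $f(s)$, for instance via a compactness (Tychonoff) argument on the product of bracketing intervals followed by a Zorn-type selection that respects order and operations. The main obstacle is precisely this consistency: verifying that the extension exists, is well-defined independently of the choice of multiplier, and is simultaneously a semiring homomorphism and order-preserving --- essentially all the technical gap between Theorem~\ref{thm:spectrumpreorder} and its classical predecessor is concentrated here.
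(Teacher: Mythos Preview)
The paper does not prove this theorem; it is quoted from \cite{vrana2020generalization} and used as a black box. There is therefore no proof in the present paper to compare against.

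On the merits of your sketch: the forward direction is indeed immediate. For the converse, there is a concrete gap in your reduction step. You write that multiplying $x$ and $y$ by a common $m\in M$ yields a comparison ``$mx$ versus $my$ inside $S_0$'' to which Strassen's classical Positivstellensatz applies. But (M1) does not assert that $ms\in S_0$: it only says that $ms$ is \emph{order-bounded} between $1$ and some natural number. In general $M\cdot S\not\subseteq S_0$ --- for the dichotomy semiring of this paper, $M$ consists of one-dimensional pairs while $S$ contains arbitrary quantum pairs, and multiplying a genuinely quantum dichotomy by a scalar pair does not make it classical, let alone place it in the subsemiring generated by scalars. So the failure $x\not\asymptoticge y$ cannot be transported into $S_0$ by multiplication in the way you describe, and there is no pair of $S_0$-elements on which to invoke the classical result directly.

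What remains, then, is exactly the part you flag as ``the most delicate'': one must run the Zorn/maximal-preorder construction on $S$ itself, using (M1) only to bracket arbitrary elements by $S_0$-elements and (M2) to ensure that the $S_0$-part behaves Archimedeanly. Your sketch of this step (bracketing intervals, Tychonoff, Zorn) is too vague to count as a strategy: the bracketing interval for $s$ depends on the chosen multiplier $m$, and you have not indicated how to produce a single assignment $s\mapsto f(s)$ that is simultaneously additive, multiplicative, and monotone, nor why the partial constraints coming from different multipliers are mutually consistent. That consistency is the entire content of the theorem beyond the classical case, so as written your proposal identifies the right obstacle but does not overcome it.
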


It is often convenient to express the asymptotic comparison of a pair of elements in terms of rates (note that this is the maximal regularized rate in the sense of \cite[8.21. Remark]{fritz2017resource}), defined as
\begin{equation}
R(x\to y)=\sup\setbuild{r\in\mathbb{R}_{\ge 0}}{\exists(k_n)_{n\in\mathbb{N}}\text{ sublinear:}\forall n\in\mathbb{N}:u^{k_n}x^n\preorderge y^{\lceil r n\rceil}}.
\end{equation}
$x\asymptoticge y$ is equivalent to $R(x\to y)\ge 1$ and in general $R(x\to y)$ can be understood as a way to measure the relative strength of the resources represented by $x$ and $y$. Under the conditions of Theorem~\ref{thm:spectrumpreorder}, the rate is given by
\begin{equation}\label{eq:generalrateformula}
R(x\to y)=\sup\setbuild{r\in\mathbb{R}_{\ge 0}}{\forall f\in\Delta(S,\preorderle):\log f(x)\ge r\log f(y)}.
\end{equation}
Assuming $x,y\ge 1$, the supremum is equal to
\begin{equation}\label{eq:specialrateformula}
R(x\to y)=\inf_{\substack{f\in\Delta(S,\preorderle)  \\  f(y)\neq 1}}\frac{\log f(x)}{\log f(y)}.
\end{equation}

A monotone semiring homomorphism $\varphi:S_1\to S_2$ induces a map
\begin{equation}
\Delta(\varphi):\Delta(S_2,\preorderle_2)\to\Delta(S_1,\preorderle_1)
\end{equation}
between the asymptotic spectra, which sends $f$ to $f\circ\varphi$. We will be interested in the special case when the homomorphism is the inclusion of a subsemiring that contains ``large'' elements so that the following proposition applies.
\begin{proposition}[{\cite[Proposition 3.9.]{vrana2020generalization}}]\label{prop:surjective}
Let $(S,\preorderle)$ be a preordered semiring of polynomial growth and $S_0\le S$ a subsemiring such that $\forall s\in S\setminus\{0\}\exists r,q\in S_0$ such that $1\preorderle rs\preorderle q$. Let $i:S_0\hookrightarrow S$ be the inclusion. Then $\Delta(i)$ is surjective.
\end{proposition}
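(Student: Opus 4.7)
The goal is, given an arbitrary $f_0 \in \Delta(S_0, \preorderle|_{S_0})$, to produce some $f \in \Delta(S, \preorderle)$ satisfying $f \circ i = f_0$. My plan is to define $f$ pointwise as an envelope forced by the sandwich hypothesis and then to check that the resulting function is a monotone semiring homomorphism.

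For each $s \in S \setminus \{0\}$ and $n \geq 1$, applying the hypothesis to $s^n$ yields $r_n, q_n \in S_0$ with $1 \preorderle r_n s^n \preorderle q_n$. If $f$ is to be a monotone homomorphism extending $f_0$, then $1 \leq f_0(r_n) f(s)^n \leq f_0(q_n)$, which forces $f_0(r_n) > 0$ and $f(s) \in [f_0(r_n)^{-1/n}, (f_0(q_n)/f_0(r_n))^{1/n}]$. This motivates the upper envelope
\begin{equation*}
\overline{f}(s) = \inf\bigl\{(f_0(q)/f_0(r))^{1/n} : n \geq 1,\ r,q \in S_0,\ f_0(r)>0,\ r s^n \preorderle q\bigr\}
\end{equation*}
and the analogous lower envelope $\underline{f}(s)$ using inequalities $q \preorderle r s^n$. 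A cross-multiplication argument --- combining $r_1 s^{n_1} \preorderle q_1$ with $q_2 \preorderle r_2 s^{n_2}$ into $r_1^{n_2} q_2^{n_1} \preorderle r_1^{n_2} r_2^{n_1} s^{n_1 n_2} \preorderle r_2^{n_1} q_1^{n_2}$ and applying $f_0$ --- shows $\underline{f}(s) \leq \overline{f}(s)$; both are finite and positive by the hypothesis, and both equal $f_0(s)$ for $s \in S_0$ (take $n=1$, $r=1$). The natural choice is $f(s) := \overline{f}(s)$, with $f(0) := 0$.

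Monotonicity is automatic: if $s \preorderle t$, then any upper sandwich $r t^n \preorderle q$ yields $r s^n \preorderle q$, whence $\overline{f}(s) \leq \overline{f}(t)$. One direction of multiplicativity follows by multiplying sandwich inequalities, $(r_1 s^n)(r_2 t^n) \preorderle q_1 q_2$, giving $\overline{f}(st) \leq \overline{f}(s) \overline{f}(t)$; the reverse inequality follows by pairing a sandwich of $st$ with lower-envelope data for one factor.

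The main obstacle will be additivity, $\overline{f}(s+t) = \overline{f}(s) + \overline{f}(t)$, because the envelopes are built from multiplicative rather than additive sandwiches. My plan is to expand $(s+t)^n$ via the binomial formula, use the order embedding $\mathbb{N} \hookrightarrow S$ (a property available in the setting of Theorem~\ref{thm:spectrumpreorder}) to compare mixed terms $s^k t^{n-k}$ with integer multiples of products of sandwiches of $s$ and $t$, and then take $n$-th roots and let $n \to \infty$ so that polynomial prefactors contribute $1$ in the limit. Should this direct approach prove too fragile, the fallback is a Zorn's-lemma argument on the poset of partial extensions $(T, f_T)$ with $S_0 \subseteq T \subseteq S$: a maximal element must satisfy $T = S$, because the sandwich hypothesis is inherited by $T$, so the envelope construction relative to $T$ allows one to adjoin any $s \in S \setminus T$ by choosing a value in $[\underline{f_T}(s), \overline{f_T}(s)]$ consistent with the finitely many additive relations arising in the subsemiring generated by $T \cup \{s\}$.
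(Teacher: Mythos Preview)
The paper does not prove this proposition --- it is quoted from \cite{vrana2020generalization} with no argument given --- so there is nothing here to compare your attempt against, and it must stand on its own.

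Your primary plan has a genuine gap: setting $f:=\overline f$ does not produce a homomorphism. The ``reverse inequality'' you claim for multiplicativity does not follow as stated: pairing an upper sandwich $r(st)^n\preorderle q$ with a lower sandwich $q'\preorderle r' t^m$ and cross-multiplying powers yields only $\overline f(st)\ge\overline f(s)\,\underline f(t)$, not $\overline f(st)\ge\overline f(s)\,\overline f(t)$. Hence $\overline f$ is multiplicative only if $\underline f=\overline f$, and that equality fails whenever the extension of $f_0$ is non-unique --- in which case $\overline f$ is the pointwise supremum over all extensions rather than any single one of them, and there is no reason for it to be additive either. (Your binomial plan for additivity also invokes the order embedding $\mathbb N\hookrightarrow S$, which is a hypothesis of Theorem~\ref{thm:spectrumpreorder} but not of this proposition.) Your Zorn's-lemma fallback is the right shape but skips the decisive step: for a maximal partial extension $f_T$ and $s\notin T$, producing a value $c$ for which the induced map on $T[s]$ is well-defined and monotone is a Hahn--Banach-type compatibility problem involving \emph{infinitely} many order relations among $T$-polynomials in $s$, not ``finitely many additive relations''; showing that the admissible set of $c$'s is nonempty is exactly where the polynomial-growth hypothesis and the power-universal element must do real work, and your sketch does not indicate how.
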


\subsection{Pinching}

One of the main technical tools relating classical and quantum dichotomies is the pinching map. Let $A$ be a normal operator on a finite dimensional Hilbert space $\mathcal{H}$ and let
\begin{equation}
A=\sum_{\lambda\in\spectrum(A)}\lambda P_{\lambda}
\end{equation}
be its spectral decomposition where the operators $P_{\lambda}$ are pairwise orthogonal projections. The $A$-pinching map $\pinching{A}:\boundeds(\mathcal{H})\to\boundeds(\mathcal{H})$ is defined as
\begin{equation}
\pinching{A}(X)=\sum_{\lambda\in\spectrum(A)}P_{\lambda}XP_{\lambda}.
\end{equation}
The most important properties of the pinching map are summarized in the following lemma.
\begin{lemma}[see e.g. \cite{dye1952radon} and \cite{dixmier1981neumann}]\label{lem:pinchingproperties}
Let $A$ be as above.
\begin{enumerate}[(i)]
\item\label{it:pinchingcpt} $\pinching{A}$ is completely positive and trace preserving.
\item\label{it:pinchingfix} $\pinching{A}(A)=A$.
\item\label{it:pinchingcommutes} $\pinching{A}(X)$ commutes with $A$ for every $X$.
\item\label{it:pinchinginequality} If $X\ge 0$ then $X\le|\spectrum(A)|\mathcal{P}_A(X)$. (pinching inequality)
\item\label{it:pinchingrandomunitary} $\pinching{A}$ is a convex combination of unitary conjugations.
\end{enumerate}
\end{lemma}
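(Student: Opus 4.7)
The plan is to dispatch items (i)--(iii) by direct computation from the spectral decomposition, establish (v) by an explicit discrete Fourier construction, and then obtain (iv) as an immediate corollary of (v). In particular I would depart from the order in which the properties are listed, proving (v) before (iv).

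For (i), I would note that each summand $X\mapsto P_\lambda XP_\lambda$ is in Kraus form with the self-adjoint Kraus operator $P_\lambda$, hence completely positive, and that trace preservation follows from $\sum_\lambda P_\lambda=I$ combined with $P_\lambda^2=P_\lambda$ via $\Tr(P_\lambda XP_\lambda)=\Tr(P_\lambda X)$. For (ii), inserting $A=\sum_\mu \mu P_\mu$ into the definition collapses using the orthogonality relation $P_\mu P_\lambda=\delta_{\mu\lambda}P_\lambda$. For (iii), that same orthogonality shows that both $A\pinching{A}(X)$ and $\pinching{A}(X)A$ equal $\sum_\lambda \lambda P_\lambda XP_\lambda$. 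These steps are purely mechanical.

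The substantive content is in (v), and this is where I would concentrate the argument. Writing $d=|\spectrum(A)|$, enumerating the eigenvalues as $\lambda_1,\ldots,\lambda_d$, and setting $\omega=e^{2\pi i/d}$, I would introduce the unitary $U=\sum_{j=1}^{d}\omega^j P_{\lambda_j}$ and expand
\[ U^k X U^{-k}=\sum_{j,j'}\omega^{k(j-j')}P_{\lambda_j}XP_{\lambda_{j'}}. \]
Averaging over $k=0,\ldots,d-1$ and invoking the character identity $\frac{1}{d}\sum_{k=0}^{d-1}\omega^{k(j-j')}=\delta_{jj'}$ collapses the double sum to $\sum_j P_{\lambda_j}XP_{\lambda_j}=\pinching{A}(X)$, exhibiting $\pinching{A}$ as the uniform convex combination of the unitary conjugations by $U^0,U^1,\ldots,U^{d-1}$. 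The only non-routine ingredient is spotting this Fourier-type unitary, and this is what I expect to be the main (and essentially only) conceptual obstacle in the whole proof.

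Once (v) is in place, (iv) drops out for free: if $X\ge 0$ then each conjugate $U^k X U^{-k}$ is positive semidefinite, so retaining only the $k=0$ term in the average gives $\pinching{A}(X)\ge \frac{1}{d}X$, that is, $X\le |\spectrum(A)|\,\pinching{A}(X)$, as required.
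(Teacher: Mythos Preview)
Your proof is correct and follows the standard approach to these properties; the discrete-Fourier construction in (v) and the deduction of (iv) from it are exactly the classical argument. Note, however, that the paper does not actually supply its own proof of this lemma: it is stated with the attribution ``see e.g.\ \cite{dye1952radon} and \cite{dixmier1981neumann}'' and no proof environment follows, so there is nothing in the paper to compare your argument against beyond observing that it matches what one finds in the cited sources.
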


\begin{proposition}\label{prop:pinchingpowerlimit}
Let $\alpha\ge 1$, $A,X\in\boundeds(\mathcal{H})$, $A$ normal and $X\ge 0$. Then
\begin{equation}
\lim_{n\to\infty}\sqrt[n]{\Tr\left(\pinching{A^{\otimes n}}(X^{\otimes n})\right)^\alpha}=\Tr X^\alpha.
\end{equation}
\end{proposition}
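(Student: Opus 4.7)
The plan is to sandwich the quantity $\Tr\bigl(\pinching{A^{\otimes n}}(X^{\otimes n})\bigr)^\alpha$ between two sequences whose $n$-th roots both tend to $\Tr X^\alpha$, using the two halves of Lemma~\ref{lem:pinchingproperties}.

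For the upper bound I would invoke part~(\ref{it:pinchingrandomunitary}), writing $\pinching{A^{\otimes n}}(\cdot)=\sum_j p_j U_j(\cdot)U_j^*$ with $U_j$ unitary. Two standard properties of the map $Y\mapsto\Tr Y^\alpha$ on the positive cone for $\alpha\ge 1$ come into play: convexity (which follows by writing $\Tr Y^\alpha=\norm[\alpha]{Y}^\alpha$ as the composition of the convex nondecreasing function $t\mapsto t^\alpha$ with the Schatten $\alpha$-norm) and invariance under unitary conjugation. Together they give
\begin{equation}
\Tr\bigl(\pinching{A^{\otimes n}}(X^{\otimes n})\bigr)^\alpha\le\sum_j p_j\Tr\bigl(U_j X^{\otimes n}U_j^*\bigr)^\alpha=(\Tr X^\alpha)^n,
\end{equation}
so the $n$-th root is bounded above by $\Tr X^\alpha$.

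For the matching lower bound I would use the pinching inequality of part~(\ref{it:pinchinginequality}), namely $X^{\otimes n}\le|\spectrum(A^{\otimes n})|\cdot\pinching{A^{\otimes n}}(X^{\otimes n})$. Since the eigenvalues of $A^{\otimes n}$ are $n$-fold products of eigenvalues of $A$, if $A$ has $d$ distinct eigenvalues then $|\spectrum(A^{\otimes n})|\le\binom{n+d-1}{d-1}$, which is polynomial in $n$ and hence satisfies $|\spectrum(A^{\otimes n})|^{1/n}\to 1$. Combining this with the fact that $Y\mapsto\Tr Y^\alpha$ is monotone under the L\"owner order for $\alpha\ge 1$ (a consequence of Weyl's monotonicity $\lambda_k^\downarrow(Y_1)\le\lambda_k^\downarrow(Y_2)$ whenever $0\le Y_1\le Y_2$, together with monotonicity of $t\mapsto t^\alpha$ on $[0,\infty)$) yields
\begin{equation}
(\Tr X^\alpha)^n=\Tr(X^{\otimes n})^\alpha\le|\spectrum(A^{\otimes n})|^\alpha\Tr\bigl(\pinching{A^{\otimes n}}(X^{\otimes n})\bigr)^\alpha,
\end{equation}
and taking $n$-th roots closes the sandwich.

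No single step is conceptually deep; the subtlety lies in marshalling two distinct but classical properties of $\Tr(\cdot)^\alpha$ for $\alpha\ge 1$---convexity and L\"owner-monotonicity---to convert the two halves of the pinching lemma into matching asymptotic bounds, with the polynomial growth of $|\spectrum(A^{\otimes n})|$ ensuring the prefactor is exponentially negligible. The one place where $\alpha\ge 1$ is genuinely used (rather than just $\alpha>0$) is in guaranteeing both of these properties simultaneously.
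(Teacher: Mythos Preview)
Your proposal is correct and follows essentially the same approach as the paper: both proofs sandwich $\Tr\bigl(\pinching{A^{\otimes n}}(X^{\otimes n})\bigr)^\alpha$ using the pinching inequality for the lower bound (via L\"owner-monotonicity of $Y\mapsto\Tr Y^\alpha$) and the random-unitary decomposition for the upper bound (via convexity and unitary invariance). The only cosmetic difference is your spectrum-size estimate $\binom{n+d-1}{d-1}$ versus the paper's $(n+1)^{\dim\mathcal{H}}$, and your more explicit justification of the convexity and monotonicity of $\Tr(\cdot)^\alpha$.
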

\begin{proof}
We make use of \ref{it:pinchinginequality} and \ref{it:pinchingrandomunitary} of Lemma~\ref{lem:pinchingproperties}:
\begin{equation}
|\spectrum(A^{\otimes n})|^{-1}X^{\otimes n}\le\pinching{A^{\otimes n}}(X^{\otimes n})=\sum_{i\in I}p_iU_i X^{\otimes n}U_i^*,
\end{equation}
where $(p_i)_{i\in I}$ is a probability vector and $U_i$ are unitaries on $\mathcal{H}$. The function $M\mapsto\Tr M^\alpha$ is monotone increasing and convex on the set of positive semidefinite operators, therefore
\begin{equation}
\begin{split}
|\spectrum(A^{\otimes n})|^{-\alpha}(\Tr X^\alpha)^n
 & = \Tr(|\spectrum(A^{\otimes n})|^{-1}X^{\otimes n})^\alpha  \\
 & \le \Tr\left(\pinching{A^{\otimes n}}(X^{\otimes n})\right)^\alpha  \\
 & = \Tr\left(\sum_{i\in I}p_iU_i X^{\otimes n}U_i^*\right)^\alpha  \\
 & \le \sum_{i\in I}p_i\Tr\left(U_i X^{\otimes n}U_i^*\right)^\alpha
  = (\Tr X^\alpha)^n.
\end{split}
\end{equation}
The claim follows by taking the $n$th root and limits $n\to\infty$, noting that $1\le|\spectrum(A^{\otimes n})|\le(n+1)^{\dim\mathcal{H}}$, a consequence of \cite[Lemma 2.2]{csiszar2011information}.
\end{proof}

\section{Dichotomies and pair transformations}\label{sec:dichotomysemiring}

In this section we describe the preordered semiring of quantum dichotomies and the associated asymptotic preorder, and then show that it satisfies the conditions of Theorem~\ref{thm:spectrumpreorder} and therefore the asymptotic preorder is characterized by monotone semiring homomorphisms. It will be convenient to work with the following relaxed notion of a quantum dichotomy.
\begin{definition}[dichotomy]
A quantum dichotomy on a finite dimensional Hilbert space $\mathcal{H}$ is a pair $(\rho,\sigma)$ where $\rho,\sigma\in\boundeds(\mathcal{H})$ are positive semidefinite, $\support\rho\subseteq\support\sigma$, and if $\rho=0$ then $\sigma=0$. A classical dichotomy is a quantum dichotomy $(\rho,\sigma)$ which in addition satisfies $\rho\sigma=\sigma\rho$.

A dichotomy $(\rho,\sigma)$ is called normalized if $\Tr\rho=\Tr\sigma=1$.
\end{definition}
We call two dichotomies $(\rho,\sigma)$ and $(\rho',\sigma')$ on $\mathcal{H}$ and $\mathcal{H}'$ equivalent if there is a partial isometry $U:\mathcal{H}\to\mathcal{H}'$ such that $U\rho U^*=\rho'$, $U\sigma U^*=\sigma'$, $U^*\rho' U=\rho$ and $U^*\sigma' U=\sigma$. In other words, equivalence means that the pairs are essentially the same, up to possibly enlarging one of the Hilbert spaces so that they have the same size, followed by a unitary rotation.

Technically, we do not wish to distinguish equivalent dichotomies, but work instead with equivalence classes. It is clear that every dichotomy is equivalent to one on $\mathbb{C}^d$ for some $d\in\mathbb{N}$, therefore we can form the set of equivalence classes by taking the quotient of the set of dichotomies on $\mathbb{C}^d$ for all $d$ by this equivalence relation. Nevertheless, we will frequently gloss over this distinction to avoid cumbersome wording, and pretend that dichotomies themselves are the elements.

A classical dichotomy may equivalently be thought of as a pair of measures $(p,q)$ on a finite set $\mathcal{X}$. We will think of these as collections of the nonnegative real numbers $(p_x)_{x\in\mathcal{X}},(q_x)_{x\in\mathcal{X}}$ that the measures associate with the one-element subsets.

We let $\dichotomies$ denote the set of equivalence classes of quantum dichotomies and $\cdichotomies$ the subset of equivalence classes of classical dichotomies. Our next goal is to equip both sets with binary operations that turn them into commutative semirings.
\begin{definition}[addition, multiplication of dichotomies]
Let $(\rho,\sigma)$ and $(\rho',\sigma')$ be dichotomies on the Hilbert spaces $\mathcal{H}$ and $\mathcal{H}'$, respectively. The sum $(\rho,\sigma)+(\rho',\sigma')$ is the dichotomy $(\rho\oplus\rho',\sigma\oplus\sigma')$ on the Hilbert space $\mathcal{H}\oplus\mathcal{H}'$. The product $(\rho,\sigma)(\rho',\sigma')$ is the dichotomy $(\rho\otimes\rho',\sigma\otimes\sigma')$ on the Hilbert space $\mathcal{H}\otimes\mathcal{H}'$.
\end{definition}
Both the addition and the multiplication induce well-defined operations on $\dichotomies$, which are easily seen to be commutative and associative, and satisfy the distributive law. The equivalence class of the dichotomy $(0,0)$ on $\mathbb{C}$ is the neutral element for addition, whereas the equivalence class of the dichotomy $(1,1)$ on $\mathbb{C}$ is the neutral element for multiplication. Therefore $\dichotomies$ is a semiring. It is clear that $\cdichotomies\subseteq\dichotomies$ is a subsemiring.

The final ingredient is a preorder that generalizes relative majorization to allow comparison of unnormalized states, and in particular induces the usual ordering on the natural numbers, represented by pairs $(I,I)$ on $\mathbb{C}^n$.
\begin{definition}\label{def:preorderdef}
Let $(\rho,\sigma)$ and $(\rho',\sigma')$ be dichotomies on the Hilbert spaces $\mathcal{H}$ and $\mathcal{H}'$, respectively. We say that the pair $(\rho,\sigma)$ is greater than $(\rho',\sigma')$ and write $(\rho,\sigma)\preorderge(\rho',\sigma')$ if there exists a completely positive trace-nonincreasing map $T:\boundeds(\mathcal{H})\to\boundeds(\mathcal{H}')$ such that
\begin{subequations}
\begin{align}
T(\rho) & \ge \rho'  \label{eq:dichotomypreorderdefinequality} \\
T(\sigma) & = \sigma'.  \label{eq:dichotomypreorderdefequality}
\end{align}
\end{subequations}
\end{definition}
This gives a well-defined relation $\preorderle$ on $\dichotomies$ which is clearly reflexive and transitive.

The operational interpretation of this preorder is the following. Let $(\rho,\sigma)$ and $(\rho',\sigma')$ be normalized dichotomies and $a,b\in(0,1]$. The relation $(\rho,\sigma)\preorderge(a\rho',b\sigma')$ means that there is an instrument with a distinguished outcome (``success'') corresponding to $T$, with the following properties: when applied to $\sigma$, the probability of the successful outcome is $b$ and in this case the post-measurement state is $\sigma'$; when applied to $\rho$, the probability of success satisfies $p_s=\Tr T(\rho)\ge a$ and the post-measurement state is bounded from below by $\frac{a}{p_s}\rho'$. The latter condition implies that the post-measurement state is close to $\rho'$:
\begin{equation}
\begin{split}
\frac{1}{2}\norm[1]{\frac{T(\rho)}{\Tr T(\rho)}-\rho'}
 & = \frac{1}{2}\norm[1]{\frac{T(\rho)-a\rho'}{\Tr T(\rho)}-\left(1-\frac{a}{\Tr T(\rho)}\right)\rho'}  \\
 & \le \frac{1}{2}\norm[1]{\frac{T(\rho)-a\rho'}{\Tr T(\rho)}}+\frac{1}{2}\norm[1]{\left(1-\frac{a}{\Tr T(\rho)}\right)\rho'}  \\
 & = \frac{1}{2}\frac{\Tr T(\rho)-a}{\Tr T(\rho)}+\frac{1}{2}\left(1-\frac{a}{\Tr T(\rho)}\right)  \\
 & = 1-\frac{a}{\Tr T(\rho)}  \\
 & \le 1-a.
\end{split}
\end{equation}

We remark that our definition is closely related but not identical to the notion of relative submajorization, introduced in \cite{renes2016relative}. In the definition of the latter, the condition \eqref{eq:dichotomypreorderdefequality} is relaxed to $T(\sigma)\le\sigma'$. The subsequent analysis works equally well for relative submajorization, with minor changes (see Remark~\ref{rem:relativesubmajorization} below).

\begin{proposition}\label{prop:dichotomiespreorderedsemiring}
$(\dichotomies,\preorderle)$ is a preordered semiring.
\end{proposition}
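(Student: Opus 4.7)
The plan is to verify the three remaining semiring-preorder axioms: compatibility of $\preorderle$ with addition, compatibility with multiplication, and $0\preorderle 1$. The text already establishes that $\dichotomies$ is a commutative semiring and that $\preorderle$ is reflexive and transitive, so these three axioms are what is left. For each of them I would exhibit an explicit completely positive trace-nonincreasing map witnessing the defining relation of Definition~\ref{def:preorderdef} and then check the two conditions \eqref{eq:dichotomypreorderdefinequality} and \eqref{eq:dichotomypreorderdefequality}.

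For $0\preorderle 1$, represented by the pairs $(0,0)$ and $(1,1)$ on $\mathbb{C}$, I would simply take $T=0$; it is CP and trace-nonincreasing, and $T(1)=0\ge 0$, $T(1)=0$.

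For additive monotonicity, suppose $(\rho,\sigma)\preorderge(\rho',\sigma')$ is witnessed by a CP trace-nonincreasing $T:\boundeds(\mathcal{H})\to\boundeds(\mathcal{H}')$, and let $(\tau,\omega)$ be a dichotomy on $\mathcal{K}$. I would define $\tilde{T}:\boundeds(\mathcal{H}\oplus\mathcal{K})\to\boundeds(\mathcal{H}'\oplus\mathcal{K})$ by $\tilde{T}(X)=T(PXP)\oplus QXQ$, where $P,Q$ are the projections onto $\mathcal{H}$ and $\mathcal{K}$. Complete positivity is immediate since this is the composition of the CP block-diagonal pinching with $T\oplus\id$, and for positive $X$ the trace bound $\Tr \tilde{T}(X)=\Tr T(PXP)+\Tr QXQ\le\Tr PXP+\Tr QXQ=\Tr X$ shows trace-nonincreasingness. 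The checks $\tilde{T}(\rho\oplus\tau)=T(\rho)\oplus\tau\ge\rho'\oplus\tau$ (using $T(\rho)\ge\rho'$ and $\tau\ge\tau$, preserving positivity under direct sum) and $\tilde{T}(\sigma\oplus\omega)=\sigma'\oplus\omega$ are then direct.

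For multiplicative monotonicity, with the same setup I would use $T\otimes\id_{\boundeds(\mathcal{K})}$. This is CP as a tensor product of CP maps; for $X\ge 0$ a short computation gives $\Tr(T\otimes\id)(X)=\Tr T(\Tr_{\mathcal{K}}X)\le\Tr\Tr_{\mathcal{K}}X=\Tr X$, so it is trace-nonincreasing. Then $(T\otimes\id)(\rho\otimes\tau)=T(\rho)\otimes\tau\ge\rho'\otimes\tau$ follows from $T(\rho)-\rho'\ge 0$ and $\tau\ge 0$ via positivity of the tensor product of positive operators, and $(T\otimes\id)(\sigma\otimes\omega)=\sigma'\otimes\omega$ is immediate. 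In both the additive and multiplicative cases one also needs to check that the resulting pairs are genuine dichotomies in the sense of the definition; this reduces to the facts that $\support(A\oplus B)=\support A\oplus\support B$ and $\support(A\otimes B)=\support A\otimes\support B$, which yield the required support containment for the composite pairs. None of these steps is substantive; the only mild point to keep in mind is that $\preorderle$ is defined on equivalence classes, but the constructions above are manifestly compatible with partial-isometry equivalence, so they descend to $\dichotomies$.
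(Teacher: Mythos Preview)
Your proposal is correct and follows essentially the same approach as the paper: the same zero map for $0\preorderle 1$, the same block map $\tilde{T}$ (your $T(PXP)\oplus QXQ$ is exactly the paper's block formula), and the same $T\otimes\id_{\boundeds(\mathcal{K})}$ for multiplicative compatibility. You supply a bit more justification (complete positivity and trace-nonincreasingness of the constructed maps, descent to equivalence classes) than the paper does, but the constructions and verifications are identical in substance.
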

\begin{proof}
Recall that $(0,0)$ and $(1,1)$ represent the additive and multiplicative neutral elements. Choosing $T=0$ we see that $(0,0)\preorderge(1,1)$.

We need to verify the compatibility of the relation with the binary operations. Let $(\rho,\sigma)$, $(\rho',\sigma')$ and $(\omega,\tau)$ be dichotomies on the Hilbert spaces $\mathcal{H},\mathcal{H}'$ and $\mathcal{K}$, and suppose that $(\rho',\sigma')\preorderle(\rho,\sigma)$. This means that there exists a completely positive trace-nonincreasing map $T:\boundeds(\mathcal{H})\to\boundeds(\mathcal{H}')$ such that \eqref{eq:dichotomypreorderdefinequality} and \eqref{eq:dichotomypreorderdefequality} hold. Let $\tilde{T}:\boundeds(\mathcal{H}\oplus\mathcal{K})\to\boundeds(\mathcal{H}'\oplus\mathcal{K})$ be the linear map given by
\begin{equation}
\tilde{T}\left(\begin{bmatrix}
A & B  \\
C & D
\end{bmatrix}
\right)=\begin{bmatrix}
T(A) & 0  \\
0 & D
\end{bmatrix},
\end{equation}
where the block structures correspond to the direct sum decompositions above. Then $\tilde{T}$ is completely positive and trace nonincreasing and satisfies
\begin{subequations}
\begin{align}
\tilde{T}(\rho\oplus\omega) & = T(\rho)\oplus\omega \ge \rho'\oplus\omega  \\
\tilde{T}(\sigma\oplus\omega) & = T(\sigma)\oplus\omega = \sigma'\oplus\omega,
\end{align}
\end{subequations}
therefore $(\rho',\sigma')+(\omega,\tau)\preorderle(\rho,\sigma)+(\omega,\tau)$.

Similarly, the map $T\otimes\id_{\boundeds(\mathcal{K})}$ satisfies
\begin{subequations}
\begin{align}
(T\otimes\id_{\boundeds(\mathcal{K})})(\rho\otimes\omega) & = T(\rho)\otimes\omega \ge \rho'\otimes\omega  \\
(T\otimes\id_{\boundeds(\mathcal{K})})(\sigma\otimes\omega) & = T(\sigma)\otimes\omega = \sigma'\otimes\omega,
\end{align}
\end{subequations}
which shows that $(\rho',\sigma')(\omega,\tau)\preorderle(\rho,\sigma)(\omega,\tau)$.
\end{proof}

\begin{proposition}\label{prop:polynomialgrowth}
$(\dichotomies,\preorderle)$ is of polynomial growth. More precisely, the dichotomy $(3,2)$ on $\mathbb{C}$ is power universal.
\end{proposition}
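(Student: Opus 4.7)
My plan is to exploit the very simple form of CPTN maps between $\boundeds(\mathcal{H})$ and $\boundeds(\mathbb{C})\cong\mathbb{C}$, so that both requirements for $u=(3,2)$ to be power universal reduce to elementary scalar/operator inequalities. Concretely, a CPTN map $T:\mathbb{C}\to\boundeds(\mathcal{H})$ has the form $T(z)=zN$ with $N\ge 0$ and $\Tr N\le 1$, while a CPTN map $T:\boundeds(\mathcal{H})\to\mathbb{C}$ has the form $T(X)=\Tr(MX)$ with $0\le M\le I$. I would state these as a preliminary observation and then treat the two conditions separately, for a fixed nonzero $(\rho,\sigma)$.

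For the inequality $(\rho,\sigma)\preorderle u^k=(3^k,2^k)$ I would look for a map of the first type: the equation $T(2^k)=\sigma$ forces $N=2^{-k}\sigma$, and the two remaining conditions become $\rho\le(3/2)^k\sigma$ and $\Tr\sigma\le 2^k$. The second holds for any $k\ge\log_2\Tr\sigma$. The first holds once $(3/2)^k$ exceeds the largest eigenvalue of $\sigma^{-1/2}\rho\sigma^{-1/2}$ taken on $\support\sigma$; this quantity is finite precisely because $\support\rho\subseteq\support\sigma$, which is where the support condition in the definition of a dichotomy is used.

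For the inequality $1\preorderle u^k(\rho,\sigma)=(3^k\rho,2^k\sigma)$ I would use a map of the second type with $M=2^{-k}(\Tr\sigma)^{-1}P_{\support\sigma}$. This $M$ satisfies $0\le M\le I$ as soon as $2^k\ge 1/\Tr\sigma$. Because $\support\rho\subseteq\support\sigma$, we have $\Tr(P_{\support\sigma}\rho)=\Tr\rho$ and $\Tr(P_{\support\sigma}\sigma)=\Tr\sigma$, so the equality $T(2^k\sigma)=1$ holds automatically and the inequality $T(3^k\rho)\ge 1$ reduces to $(3/2)^k\Tr\rho\ge\Tr\sigma$, which is true for $k$ large because $(\rho,\sigma)\neq 0$ forces $\Tr\rho>0$.

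To finish, I would choose $k$ large enough that all four scalar estimates hold simultaneously, which gives both required inequalities for the same $k$ and exhibits $(3,2)$ as power universal. The argument is essentially routine; the only points deserving care are the identification of the CPTN maps to and from $\mathbb{C}$, and handling the case when $\sigma$ is not invertible on the full Hilbert space, which is precisely why $P_{\support\sigma}$ (rather than $I_{\mathcal{H}}$) appears in the construction.
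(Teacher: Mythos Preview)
Your proposal is correct and follows essentially the same approach as the paper: the paper uses the map $T_1(z)=z\cdot 2^{-k}\sigma$ to show $(\rho,\sigma)\preorderle u^k$ and the map $T_2(X)=2^{-k}(\Tr\sigma)^{-1}\Tr X$ to show $1\preorderle u^k(\rho,\sigma)$, with the same scalar conditions you identify. Your use of $P_{\support\sigma}$ in place of $I$ in the second map is harmless (since $\support\rho\subseteq\support\sigma$), and your choice to pick a single $k$ satisfying all four inequalities at once is a mild simplification over the paper's route of choosing separate $k_1,k_2$ and then taking the maximum.
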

\begin{proof}
Let $(\rho,\sigma)$ be a dichotomy on $\mathcal{H}$ and $u=(3,2)$. By choosing $T=\frac{1}{2}\id$ in Definition~\ref{def:preorderdef} we can see that $u\preorderge 1$. Let
\begin{equation}
k_1=\max\left\{0,\lceil\log\Tr\sigma\rceil,\left\lceil\frac{\log\norm[\infty]{\sigma^{-1/2}\rho\sigma^{-1/2}}}{\log(3/2)}\right\rceil\right\}.
\end{equation}
Then $\Tr(2^{-k_1}\sigma)\le 1$, therefore $T_1(x)=x2^{-k_1}\sigma$ defines a completely positive trace nonincreasing map $T_1:\boundeds(\mathbb{C})\to\boundeds(\mathcal{H})$. It satisfies
\begin{subequations}
\begin{align}
T_1(3^{k_1}) & = (3/2)^{k_1}\sigma\ge\norm[\infty]{\sigma^{-1/2}\rho\sigma^{-1/2}}\sigma\ge\rho  \\
T_1(2^{k_1}) & = \sigma,
\end{align}
\end{subequations}
therefore $u^{k_1}=(3^{k_1},2^{k_1})\preorderge(\rho,\sigma)$.

Let
\begin{equation}
k_2=\max\left\{0,\lceil-\log\Tr\sigma\rceil,\left\lceil\frac{\log\frac{\Tr\sigma}{\Tr\rho}}{\log(3/2)}\right\rceil\right\}.
\end{equation}
Then $2^{-k_2}(\Tr\sigma)\le 1$, therefore $T_2(x)=2^{-k_2}\frac{\Tr x}{\Tr\sigma}$ defines a completely positive trace nonincreasing map $T_2:\boundeds(\mathcal{H})\to\boundeds(\mathbb{C})$. It satisfies
\begin{subequations}
\begin{align}
T_2(3^{k_2}\rho) & = (3/2)^{k_2}\frac{\Tr\rho}{\Tr\sigma}\ge 1  \\
T_2(2^{k_2}\sigma) & = 2^{-k_2}\frac{\Tr 2^{k_2}\sigma}{\Tr\sigma}=1,
\end{align}
\end{subequations}
therefore $u^{k_2}(\rho,\sigma)=(3^{k_2}\rho,2^{k_2}\sigma)\preorderge 1$.

With $k=\max\{k_1,k_2\}$ both $u^k(\rho,\sigma)\preorderge 1$ and $u^k\preorderge(\rho,\sigma)$ hold.
\end{proof}

Since $\dichotomies$ is of polynomial growth, we can form the asymptotic preorder as in Definition~\ref{def:asymptoticpreorder}. To see how the definition specializes to our semiring, let $(\rho,\sigma)$ and $(\rho',\sigma')$ be normalized dichotomies and $R,r\ge0$. The relation $(\rho,\sigma)\asymptoticge(2^{-R}\rho',2^{-r}\sigma')$ means that for every $n$ there is a two-outcome instrument with one outcome associated with $T_n$ and interpreted as success, with the following properties:
\begin{enumerate}[(i)]
\item when $T_n$ is applied to $\sigma^{\otimes n}$, the probability of success is $2^{-nr+o(n)}$ and upon observing this outcome the post-measurement state is ${\sigma'}^{\otimes n}$
\item when applied to $\rho^{\otimes n}$, the probability of success is $p_n=2^{-nR+o(n)}$ and upon observing this outcome the post-measurement state is bounded from below by $\frac{2^{-nR+o(n)}}{p_n}{\sigma'}^{\otimes n}$.
\end{enumerate}
In the second case the condition implies that the post-measurement state has distance at most $1-2^{-nR+o(n)}$ from ${\sigma'}^{\otimes n}$, but it is stronger than merely requiring this estimate: ${\sigma'}^{\otimes n}$ may have eigenvalues that are smaller than $2^{-nR+o(n)}$, but even these cannot be completely suppressed (in particular the support of the post-measurement state must contain that of ${\sigma'}^{\otimes n}$).

Our next goal is to verify that the conditions of Theorem~\ref{thm:spectrumpreorder} hold for $(\dichotomies,\preorderle)$. The role of $M$ will be played by the dichotomies on $\mathbb{C}$. Recall that condition \ref{it:invertibleuptobounded} is that every nonzero element of $\dichotomies$ can be multiplied with a suitable element of $M$ in such a way that the product is bounded from below and above by natural numbers. First we present a sufficient condition for this boundedness property (which is also necessary, but we do not need this).
\begin{proposition}\label{prop:boundeddichotomy}
Let $(\rho,\sigma)$ be a dichotomy on $\mathcal{H}$, and suppose that $\rho\le\sigma$ and $\rank(\sigma-\rho)<\rank\sigma$. Then there is an $n\in\mathbb{N}$ such that $1\preorderle n(\rho,\sigma)$ and $(\rho,\sigma)\preorderle n$.
\end{proposition}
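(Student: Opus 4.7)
The plan is to handle the two inequalities separately by writing down explicit completely positive trace-nonincreasing maps.

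For the upper bound $(\rho,\sigma)\preorderle n$, I would simply take the ``prepare-$\sigma$'' channel $T_1:\boundeds(\mathbb{C}^n)\to\boundeds(\mathcal{H})$ given by $T_1(X)=\frac{\Tr X}{n}\sigma$. Then $T_1(I_n)=\sigma$, which handles the equality constraint \eqref{eq:dichotomypreorderdefequality} on the second component, while $T_1(I_n)=\sigma\ge\rho$ (by hypothesis) handles \eqref{eq:dichotomypreorderdefinequality}. The map is clearly completely positive, and it is trace-nonincreasing exactly when $\Tr\sigma\le n$. So any $n\ge\lceil\Tr\sigma\rceil$ works for this half.

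For the lower bound $1\preorderle n(\rho,\sigma)$, I need a completely positive trace-nonincreasing map $T_2:\boundeds(\mathbb{C}^n\otimes\mathcal{H})\to\boundeds(\mathbb{C})$ with $T_2(I_n\otimes\rho)\ge 1$ and $T_2(I_n\otimes\sigma)=1$. Such a map is given by a ``measurement effect'' $0\le E\le I$; writing $F=\Tr_1 E$ for the partial trace over the $\mathbb{C}^n$ factor, the constraints translate to $0\le F$, $F\le nI_{\mathcal{H}}$ (which, given a candidate $F$, can be realized by $E=\frac{1}{n}I_n\otimes F$), together with $\Tr(F\sigma)=1$ and $\Tr(F\rho)\ge 1$. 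Since $\rho\le\sigma$, these last two force $\Tr(F(\sigma-\rho))=0$, i.e., $F$ must be supported inside $\ker(\sigma-\rho)$.

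This is exactly where the rank hypothesis enters: $\rank(\sigma-\rho)<\rank\sigma$ forces $\support\sigma\cap\ker(\sigma-\rho)$ to contain a nonzero vector (dimension count inside $\support\sigma$), so I can pick a unit vector $\ket{\psi}$ in this intersection. Then $\ket{\psi}\in\support\sigma$ gives $c:=\bra{\psi}\sigma\ket{\psi}>0$, and the choice $F=c^{-1}\ketbra{\psi}{\psi}$ satisfies $\Tr(F\sigma)=1$ and, since $(\sigma-\rho)\ket{\psi}=0$ implies $\bra{\psi}\rho\ket{\psi}=\bra{\psi}\sigma\ket{\psi}=c$, also $\Tr(F\rho)=1$. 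The remaining bound $F\le nI_{\mathcal{H}}$ becomes $c^{-1}\le n$, which holds for any $n\ge\lceil c^{-1}\rceil$. Taking $n=\max\{\lceil\Tr\sigma\rceil,\lceil c^{-1}\rceil\}$ then satisfies both requirements simultaneously.

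The only subtlety is the dimension-count argument producing the vector $\ket{\psi}$; the rest is just verifying properties of the two explicit maps. I do not anticipate a serious obstacle, since the hypothesis $\rank(\sigma-\rho)<\rank\sigma$ was tailored to force exactly this eigenvector to exist.
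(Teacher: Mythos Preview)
Your proposal is correct and follows essentially the same approach as the paper: the paper also uses the prepare-$\sigma$ map for the upper bound and, for the lower bound, a measurement effect built from a projection $P$ with $P(\sigma-\rho)=0$ (you specialize to a rank-one $P=\ketbra{\psi}{\psi}$). If anything, your version is slightly more careful in justifying that the projection can be chosen inside $\support\sigma$ so that the normalizing constant $c=\bra{\psi}\sigma\ket{\psi}$ is strictly positive, a point the paper leaves implicit.
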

\begin{proof}
Let $P$ be a nonzero projection such that $P(\sigma-\rho)=0$. Let $n_1=\lceil(\Tr\sigma P)^{-1}\rceil$. Then $T_1(x)=(n_1\Tr(\sigma P))^{-1}\Tr(x(I\otimes P))$ defines a completely positive trace nonincreasing map $T_1:\boundeds(\mathbb{C}^{n_1}\otimes\mathcal{H})\to\boundeds(\mathbb{C})$. It satisfies
\begin{subequations}
\begin{align}
T_1(I_{n_1}\otimes\rho) & = (n_1\Tr(\sigma P))^{-1}\Tr(I_{n_1}\otimes\rho(I\otimes P))=1  \\
T_1(I_{n_1}\otimes\sigma) & = (n_1\Tr(\sigma P))^{-1}\Tr(I_{n_1}\otimes\sigma(I\otimes P))=1,
\end{align}
\end{subequations}
therefore $n_1(\rho,\sigma)\preorderge 1$.

Let $n_2=\lceil\Tr\sigma\rceil$. Then $T_2(x)=\frac{\sigma}{n_2}(\Tr x)$ defines a completely positive trace nonincreasing map $T_2:\boundeds(\mathbb{C}^{n_2})\to\boundeds(\mathcal{H})$. It satisfies
\begin{equation}
T_2(I)=\frac{\sigma}{n_2}(\Tr I)=\sigma\ge\rho,
\end{equation}
therefore $n_2\preorderge(\rho,\sigma)$.

With $n=\max\{n_1,n_2\}$ both $n(\rho,\sigma)\preorderge 1$ and $n\preorderge(\rho,\sigma)$ hold.
\end{proof}

Condition~\ref{it:boundedev} requires an inner approximation of the asymptotic spectrum of the subsemiring generated by $M$. Since $M$ consists of classical dichotomies, so does this subsemiring. For this reason we will exhibit a family of elements of $\Delta(\cdichotomies,\preorderle)$ (later we will see that the given set of elements is almost exhaustive, but this is not necessary for verifying condition~\ref{it:boundedev}). Note that a spectral point $f\in\Delta(\dichotomies,\preorderle)$ can be evaluated on elements of $\dichotomies$ (and similarly for $\cdichotomies$), which are equivalence classes of pairs. To simplify notation, we will effectively regard $f$ as a function of two variables and write e.g. $f(\rho,\sigma)$ for its value on the equivalence class of the dichotomy $(\rho,\sigma)$.
\begin{proposition}\label{prop:cmonotones}
Let $\alpha\ge1$ and consider the map $f_\alpha:\cdichotomies\to\mathbb{R}_{\ge0}$ defined as
\begin{equation}
f_\alpha(p,q)=\sum_{x\in\mathcal{X}}p_x^\alpha q_x^{1-\alpha}.
\end{equation}
Then $f_\alpha\in\Delta(\cdichotomies,\preorderle)$.
\end{proposition}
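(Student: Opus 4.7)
The plan is to verify the four defining properties of a spectral point for $f_\alpha$: additivity, multiplicativity, monotonicity under $\preorderle$, and normalization $f_\alpha(1)=1$.

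Three of these are essentially formal. Normalization reads $f_\alpha(1,1)=1^\alpha\cdot 1^{1-\alpha}=1$. Additivity follows because addition in $\cdichotomies$ is concatenation of pairs of measures on disjoint sample spaces, so the sum defining $f_\alpha$ literally splits. Multiplicativity follows because multiplication in $\cdichotomies$ is the product-measure construction and $(p_xp'_y)^\alpha(q_xq'_y)^{1-\alpha}=p_x^\alpha q_x^{1-\alpha}\cdot {p'_y}^\alpha{q'_y}^{1-\alpha}$ factors, so the double sum factorizes by Fubini. I would also remark at this stage that $f_\alpha$ is well-defined on equivalence classes, since classical equivalences amount to permutations possibly followed by padding with zero entries, both of which leave the sum invariant under the convention $0^\alpha\cdot 0^{1-\alpha}:=0$ (justified by the support hypothesis on dichotomies).

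The substantive content is monotonicity. Given $(p',q')\preorderle(p,q)$ witnessed in the classical setting by a substochastic matrix $T_{y,x}$ with $Tp\ge p'$ and $Tq=q'$, the plan is to establish the data processing inequality $f_\alpha(Tp,Tq)\le f_\alpha(p,q)$ and then conclude via monotonicity of $s\mapsto s^\alpha t^{1-\alpha}$ in the first argument: since $p'\le Tp$ and $Tq=q'$, this gives $f_\alpha(p',q')\le f_\alpha(Tp,Tq)\le f_\alpha(p,q)$. The key observation for the DPI is that, on $\mathbb{R}_{\ge0}\times\mathbb{R}_{>0}$ extended by $F(0,0):=0$, the function $F(s,t):=s^\alpha t^{1-\alpha}$ is the perspective of the convex function $t\mapsto t^\alpha$ (convex precisely because $\alpha\ge 1$), and hence is jointly convex; being also positively $1$-homogeneous, this yields the subadditivity $F\bigl(\sum_xa_x,\sum_xb_x\bigr)\le\sum_xF(a_x,b_x)$ on the nonnegative orthant. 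Applying this with $a_x=T_{y,x}p_x$ and $b_x=T_{y,x}q_x$, pulling out $T_{y,x}$ by $1$-homogeneity, summing over $y$, and using $\sum_yT_{y,x}\le 1$ delivers the DPI.

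The main obstacle is establishing joint convexity cleanly; the perspective-function characterization is the tidiest route, though a direct Hessian computation (the Hessian is positive semidefinite of determinant zero for $\alpha\ge 1$) or a Hölder's inequality argument would also work. A minor technicality is verifying $\support(Tp)\subseteq\support(Tq)$ so that $f_\alpha(Tp,Tq)$ is finite, which follows from $\support(p)\subseteq\support(q)$ together with non-negativity of $T$: if $(Tp)_y>0$ then some $T_{y,x}p_x>0$, whence $q_x>0$ and therefore $(Tq)_y\ge T_{y,x}q_x>0$.
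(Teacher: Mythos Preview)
Your argument is correct and essentially matches the paper's: both derive the data processing inequality from the convexity of $u\mapsto u^\alpha$ (the paper via an explicit Jensen step with the weights $T_{xy}q_x/Q_y$, you via the perspective-function/subadditivity packaging), and then invoke monotonicity in the first argument together with $\sum_y T_{y,x}\le 1$. The one point you pass over that the paper makes explicit is the reduction to a \emph{classical} witness: the preorder on $\cdichotomies$ is inherited from $\dichotomies$ and is defined via arbitrary completely positive trace-nonincreasing maps, so you should justify replacing $T$ by the substochastic matrix $T_{y,x}=\bra{y}T(\ketbra{x}{x})\ket{y}$, which the paper does by sandwiching $T$ between dephasing maps, $D_{\mathcal{Y}}\circ T\circ D_{\mathcal{X}}$.
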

\begin{proof}
It is clear that $f_\alpha$ is a semiring homomorphism for every $\alpha\ge 1$, thus we need to show that it is also monotone. This immediately follows from the data processing inequality for the R\'enyi divergence since $f_\alpha(p,q)=2^{(\alpha-1)\relativeentropy[\alpha]{p}{q}}$, but for completeness we provide a direct proof.

Suppose that $(p,q)$ and $(r,s)$ are classical dichotomies characterized by the probabilities $p_x,q_x,r_y,s_y$ ($x\in\mathcal{X}$, $y\in\mathcal{Y}$) and $(p,q)\preorderge(r,s)$. The ordering means that there is a completely positive trace-nonincreasing map $T:\boundeds(\mathbb{C}^{\mathcal{X}})\to\boundeds(\mathbb{C}^{\mathcal{X}'})$ such that
\begin{align}\label{eq:pqgreaterrs}
\sum_{x\in\mathcal{X}}p_x T(\ketbra{x}{x}) & \ge\sum_{y\in\mathcal{Y}}r_y\ketbra{y}{y}  \\
\sum_{x\in\mathcal{X}}q_x T(\ketbra{x}{x}) & =\sum_{y\in\mathcal{Y}}s_y\ketbra{y}{y}.
\end{align}
The dephasing maps $D_{\mathcal{X}}:\boundeds(\mathbb{C}^{\mathcal{X}})\to\boundeds(\mathbb{C}^{\mathcal{X}})$ given by $D_{\mathcal{X}}(\ketbra{x_1}{x_2})=\delta_{x_1,x_2}\ketbra{x_1}{x_2}$ are completely positive and $D_{\mathcal{X}}$ ($D_{\mathcal{Y}}$) fixes $p$ and $q$ ($r$ and $s$), therefore we can replace $T$ with $D_{\mathcal{Y}}\circ T\circ D_{\mathcal{X}}$. This composition is essentially a classical substochastic map with matrix entries $T_{xy}:=\bra{y}T(\ketbra{x}{x})\ket{y}$. Let $Q_y=\sum_{x\in\mathcal{X}}T_{xy}q_x$. Then
\begin{equation}
\begin{split}
f_\alpha(r,s)
 & = \sum_{y\in\support Q}r_y^\alpha s_y^{1-\alpha}  \\
 & \le \sum_{y\in\support Q}\left(\sum_{x\in\support q}T_{xy}p_x\right)^\alpha\left(\sum_{x\in\support q}T_{xy}q_x\right)^{1-\alpha}  \\
 & = \sum_{y\in\support Q}Q_y\left(\sum_{x\in\support q}\frac{T_{xy}q_x}{Q_y}\frac{p_x}{q_x}\right)^\alpha  \\
 & \le \sum_{y\in\support Q}Q_y\sum_{x\in\support q}\frac{T_{xy}q_x}{Q_y}\left(\frac{p_x}{q_x}\right)^\alpha  \\
 & = \sum_{x\in\support q}q_x\left(\frac{p_x}{q_x}\right)^\alpha\sum_{y\in\support Q}T_{xy}  \\
 & \le \sum_{x\in\support q}q_x\left(\frac{p_x}{q_x}\right)^\alpha=f_\alpha(p,q),
\end{split}
\end{equation}
where the first inequality uses \eqref{eq:pqgreaterrs}, the second inequality uses convexity of $x\mapsto x^\alpha$ and the third inequality uses that $\sum_yT_{xy}\le 1$.
\end{proof}

We are now in a position to prove that $\dichotomies$ satisfies the conditions of Theorem~\ref{thm:spectrumpreorder}. The first condition is verified with an application of Proposition~\ref{prop:boundeddichotomy}, while for the second condition we use the spectral points presented in Proposition~\ref{prop:cmonotones} in the large $\alpha$ limit.
\begin{proposition}\label{prop:multiplierset}
Let $M$ be the set of dichotomies on $\mathbb{C}$ and $S_0\subseteq\dichotomies$ the subsemiring generated by $M$.
\begin{enumerate}[(i)]
\item\label{it:orderembedding} The map $\mathbb{N}\hookrightarrow\dichotomies$ is an order embedding.
\item\label{it:Mmultiplier} For every dichotomy $(\rho,\sigma)\neq 0$ there is an $m\in M$ and $n\in\mathbb{N}$ such that $1\preorderle m(\rho,\sigma)\preorderle n$.
\item\label{it:Mspectrumbounded} If $m\in M$ and $\ev_m:\Delta(S_0,\preorderle)\to\mathbb{R}_{\ge 0}$ is bounded, then there is an $n\in\mathbb{N}$ such that $m\preorderle n$.
\end{enumerate}
\end{proposition}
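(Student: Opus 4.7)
Part (i) is a quick trace-counting argument: for $n\le m$, any partial isometry $V:\mathbb{C}^m\to\mathbb{C}^n$ with $VV^*=I_n$ gives a completely positive, trace-nonincreasing map $X\mapsto VXV^*$ sending $(I_m,I_m)$ onto $(I_n,I_n)$, so $n\preorderle m$. Conversely, $n\preorderle m$ provides a CPTNI $T$ with $T(I_m)=I_n$, whence $n=\Tr I_n=\Tr T(I_m)\le\Tr I_m=m$.

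For part (ii), the plan is to choose $m\in M$ so that $m(\rho,\sigma)=(a\rho,b\sigma)$ simultaneously lands in the hypothesis of Proposition~\ref{prop:boundeddichotomy} (for the upper bound) and is directly dominated below by $1$. Let $\lambda=\norm[\infty]{\sigma^{-1/2}\rho\sigma^{-1/2}}$ (with the pseudo-inverse on $\support\sigma$), and choose a unit vector $v$ derived from a top eigenvector of $\sigma^{-1/2}\rho\sigma^{-1/2}$, so that $\rho v=\lambda\sigma v$. Setting $a=1/\langle v|\rho|v\rangle$ and $b=1/\langle v|\sigma|v\rangle$ gives $b/a=\lambda$, hence $a\rho\le b\sigma$, and $v\in\ker(b\sigma-a\rho)$, so $\rank(b\sigma-a\rho)<\rank\sigma$. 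Proposition~\ref{prop:boundeddichotomy} then yields $n\in\mathbb{N}$ with $m(\rho,\sigma)\preorderle n$. For the lower bound, the CPTNI measurement map $X\mapsto\Tr(X\ketbra{v}{v})$ sends both $a\rho$ and $b\sigma$ to $1$, directly witnessing $1\preorderle m(\rho,\sigma)$.

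For part (iii), the plan is to exploit the spectral points $f_\alpha\in\Delta(\cdichotomies,\preorderle)$ from Proposition~\ref{prop:cmonotones}, whose restrictions to $S_0$ again lie in $\Delta(S_0,\preorderle)$. A nonzero $m=(a,b)\in M$ must have $a,b>0$ by the dichotomy axioms, and $\ev_m(f_\alpha)=a^\alpha b^{1-\alpha}=b(a/b)^\alpha$, which diverges as $\alpha\to\infty$ unless $a\le b$. Once $a\le b$ is forced, the map $T(X)=(b/\lceil b\rceil)\Tr X$ from $\boundeds(\mathbb{C}^{\lceil b\rceil})$ to $\boundeds(\mathbb{C})$ is CPTNI and sends $I_{\lceil b\rceil}$ to $b\ge a$, giving $m\preorderle\lceil b\rceil$.

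The main subtlety is in part (ii): both inequalities must be achieved by the \emph{same} scalar $m$, which forces $v$ to be the vector attaining the extremum of $\langle v|\rho|v\rangle/\langle v|\sigma|v\rangle$. That single vector then both ensures rank deficiency of $b\sigma-a\rho$ (feeding into Proposition~\ref{prop:boundeddichotomy} for the upper bound) and implements a rank-one measurement saturating both components (producing the direct lower bound). A naive choice of $m$ would deliver one bound but break the other.
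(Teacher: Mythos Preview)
Your proof is correct and follows the same overall strategy as the paper: a trace count for (i), reduction to Proposition~\ref{prop:boundeddichotomy} via the constant $\lambda=\norm[\infty]{\sigma^{-1/2}\rho\sigma^{-1/2}}$ for (ii), and the classical spectral points $f_\alpha$ of Proposition~\ref{prop:cmonotones} to force $a\le b$ for (iii).

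The one noteworthy difference is in part (ii). The paper simply takes $m=(1,\lambda)$, observes that $(\rho,\lambda\sigma)$ meets the hypotheses of Proposition~\ref{prop:boundeddichotomy}, and invokes that proposition for \emph{both} bounds. But Proposition~\ref{prop:boundeddichotomy} only yields $1\preorderle n\cdot m(\rho,\sigma)$, not $1\preorderle m(\rho,\sigma)$; indeed for $m=(1,\lambda)$ the latter can fail (take $\rho=\sigma$ with $\Tr\sigma<1$). This weaker form is exactly condition~(M\ref{it:invertibleuptobounded}) of Theorem~\ref{thm:spectrumpreorder}, so nothing is lost downstream, but it does not literally match the proposition as stated. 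Your more specific choice $m=(a,b)$ with $a=1/\langle v|\rho|v\rangle$ and $b=1/\langle v|\sigma|v\rangle$, together with the rank-one measurement along $v$, delivers the sharper lower bound $1\preorderle m(\rho,\sigma)$ directly, so your argument actually matches the proposition as written. The price is tracking the extremal vector $v$ explicitly and normalising both components; the paper's version is terser but slightly looser.
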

\begin{proof}
\ref{it:orderembedding}: Let $n,n'\in\mathbb{N}$. The corresponding elements in $\dichotomies$ are represented by $(I_n,I_n)$ and $(I_{n'},I_{n'})$, where $I_n$ is the identity on $\mathbb{C}^n$. Since $(0,0)\preorderle(1,1)$, if $n\le n'$ then also $(I_n,I_n)\preorderle(I_{n'},I_{n'})$. On the other hand, if $n>n'$ then a completely positive trace nonincreasing map $T:\boundeds(\mathbb{C}^{n'})\to\boundeds(\mathbb{C}^{n})$ cannot satisfy $T(I_{n'})\ge I_n$ since this would require $n'=\Tr I_{n'}\ge T(I_{n'})\ge n>n'$.

\ref{it:Mmultiplier}: Let $(\rho,\sigma)\neq 0$ be a dichotomy and let
\begin{equation}
\lambda=\min\setbuild{t\in\mathbb{R}}{\rho\le t\sigma}=\norm[\infty]{\sigma^{-1/2}\rho\sigma^{-1/2}},
\end{equation}
where the inverse is understood on the support of $\sigma$. Then $(1,\lambda)(\rho,\sigma)=(\rho,\lambda\sigma)$ satisfies the conditions of Proposition~\ref{prop:boundeddichotomy}, therefore we can choose $m=(1,\lambda)$.

\ref{it:Mspectrumbounded}: Let $p,q\in\mathbb{R}_{>0}$ and consider the element $(p,q)\in M$. Since $S_0\subseteq\cdichotomies$, every function in Proposition~\ref{prop:cmonotones} also gives rise to an element of $\Delta(S_0,\preorderle)$ by restriction. Suppose that $\ev_m$ is bounded. Then
\begin{equation}
\infty>\limsup_{\alpha\to\infty}f_\alpha(p,q)=q\lim_{\alpha\to\infty}\left(\frac{p}{q}\right)^\alpha,
\end{equation}
which is equivalent to $p\le q$. By the argument in the proof of Proposition~\ref{prop:boundeddichotomy} we see that $(p,q)\preorderle n$ for some $n\in\mathbb{N}$ (namely $n=\lceil q\rceil$).
\end{proof}

An important byproduct of the findings above is that Proposition~\ref{prop:surjective} applies to the inclusion of $\cdichotomies$ in $\dichotomies$, which means that every monotone homomorphism $f:\cdichotomies\to\mathbb{R}_{\ge0}$ has at least one (monotone, homomorphic) extenstion to $\dichotomies$.
\begin{corollary}\label{cor:surjective}
Let $i:\cdichotomies\hookrightarrow\dichotomies$ be the inclusion. Then $\Delta(i)$ is surjective.
\end{corollary}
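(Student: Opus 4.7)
The claim is essentially an immediate consequence of Proposition~\ref{prop:surjective}, which says that $\Delta$ applied to an inclusion $S_0\hookrightarrow S$ is surjective provided (a) $S$ has polynomial growth, and (b) every nonzero $s\in S$ can be ``sandwiched'' by elements of $S_0$ in the sense that there exist $r,q\in S_0$ with $1\preorderle rs\preorderle q$. So my plan is simply to check these hypotheses for $S_0=\cdichotomies$ and $S=\dichotomies$.

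Condition (a) is already in hand: Proposition~\ref{prop:polynomialgrowth} shows that $(3,2)$ is power universal in $\dichotomies$. For condition (b), I would invoke Proposition~\ref{prop:multiplierset}\ref{it:Mmultiplier}, which produces, for any nonzero dichotomy $(\rho,\sigma)$, an element $m\in M$ and an $n\in\mathbb{N}$ with $1\preorderle m(\rho,\sigma)\preorderle n$. The key observation is that both $m$ and $n$ actually lie in $\cdichotomies$: the set $M$ consists of dichotomies on the one-dimensional Hilbert space $\mathbb{C}$, for which the two operators are scalars and automatically commute, and $\mathbb{N}\hookrightarrow\dichotomies$ via $n\mapsto(I_n,I_n)$, which is classical since identities commute with everything. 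Thus we may take $r=m\in\cdichotomies$ and $q=n\in\cdichotomies$, meeting the hypothesis of Proposition~\ref{prop:surjective}.

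Having verified both hypotheses, the conclusion $\Delta(i)$ is surjective follows directly from Proposition~\ref{prop:surjective}. There is no real obstacle here; the work has already been done in Propositions~\ref{prop:polynomialgrowth} and~\ref{prop:multiplierset}, and the only thing to notice is that the bounding elements furnished by the latter are automatically classical. The write-up can therefore be extremely short — essentially one sentence identifying the bounding elements as classical and one sentence citing Proposition~\ref{prop:surjective}.
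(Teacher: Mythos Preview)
Your proposal is correct and follows essentially the same approach as the paper: invoke Proposition~\ref{prop:multiplierset}\ref{it:Mmultiplier} to obtain $m\in M$ and $n\in\mathbb{N}$ with $1\preorderle m(\rho,\sigma)\preorderle n$, observe that both $m$ and $n$ are classical dichotomies, and then apply Proposition~\ref{prop:surjective}. The paper's own proof is exactly this short argument.
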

\begin{proof}
Let $(\rho,\sigma)$ be a dichotomy. By \ref{it:Mmultiplier} of Proposition~\ref{prop:multiplierset} there is an $m\in M$ and $1\preorderle m(\rho,\sigma)\preorderle n$. Both $m$ and $n=(I_n,I_n)$ are classical, therefore the condition in Proposition~\ref{prop:surjective} is satisfied by the subsemiring $\cdichotomies$. We conclude that $\Delta(i)$ is surjective.
\end{proof}

\section{Spectral points}\label{sec:spectralpoints}

In this section we describe all the elements of $\Delta(\dichotomies,\preorderle)$ explicitly. To this end, we first find every element of $\Delta(\cdichotomies,\preorderle)$ and show that each of them has at most one extension to $\dichotomies$. Together with Corollary~\ref{cor:surjective} this implies that there is exactly one extension.
\begin{proposition}\label{prop:cmultiplicative}
Let $f\in\Delta(\cdichotomies,\preorderle)$. Then there is an $\alpha\in\{0\}\cup[1,\infty)$ such that $f(p,q)=p^\alpha q^{1-\alpha}$ for a dichotomy $(p,q)$ on $\mathbb{C}$ (with the convention $0^\alpha 0^{1-\alpha}=0$).
\end{proposition}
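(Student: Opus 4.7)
The plan is to classify $f$ on scalar dichotomies $(p, q)$ on $\mathbb{C}$ in three stages: linearize $f$ via a logarithmic change of variables, pin down its shape using the scalar preorder, and eliminate the forbidden range of exponents by a single explicit test case.

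Multiplicativity together with $f(1, 1) = 1$ forces $f > 0$ on $(p, q) \in \mathbb{R}_{>0}^2$: a zero value at some $(p_0, q_0)$ would propagate via $f(p, q) = f(p/p_0, q/q_0) f(p_0, q_0)$ to all of $\mathbb{R}_{>0}^2$, contradicting $f(1, 1) = 1$. So $g(u, v) := \log f(e^u, e^v)$ is a well-defined additive function $\mathbb{R}^2 \to \mathbb{R}$. The scalar preorder is easily computed: CP trace-nonincreasing maps $\mathbb{C} \to \mathbb{C}$ are just scalings by $t \in [0, 1]$, so $(p, q) \preorderge (p', q')$ iff $q \ge q'$ and $p/q \ge p'/q'$. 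Translated to log coordinates, this says $g$ is monotone on the closed cone generated by $(1, 0)$ and $(1, 1)$. Restricting $g$ to each of these two generating rays gives a monotone additive function on $[0, \infty)$, which must be linear by the standard one-variable argument; the additive decomposition $(u, v) = (u - v, 0) + (v, v)$ then extends linearity to the plane, yielding $g(u, v) = au + bv$ with $a \ge 0$ and $a + b \ge 0$, i.e., $f(p, q) = p^a q^b$.

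To fix $a + b$, I would use the preorder equivalence $(2, 2) \sim (1, 1) + (1, 1) = (I_2, I_2)$, realized by the mutually inverse CP maps $T(x) = x I_2/2$ and $T(X) = \Tr X$. Hence $2^{a+b} = f(2, 2) = f(I_2, I_2) = 2 f(1, 1) = 2$, so $a + b = 1$ and $f(p, q) = p^\alpha q^{1-\alpha}$ with $\alpha := a \ge 0$. Finally, to eliminate $\alpha \in (0, 1)$, I would invoke the single relation $(1, 2) + (2, 1) \preorderge (3, 3)$, witnessed by $T(X) = \Tr X$. Monotonicity together with additivity then forces $2^\alpha + 2^{1-\alpha} \ge 3$, but $h(\alpha) := 2^\alpha + 2^{1-\alpha}$ is strictly convex on $\mathbb{R}$ with $h(0) = h(1) = 3$ and minimum $2\sqrt{2} < 3$ at $\alpha = 1/2$, so $h(\alpha) < 3$ throughout $(0, 1)$. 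Thus $\alpha \in \{0\} \cup [1, \infty)$; the case $(p, q) = (0, 0)$ is covered by $f(0) = 0$ and the stated convention.

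The main obstacle I anticipate is the linearity step: lifting an abstract additive $g$ to an $\mathbb{R}$-linear one usually requires some regularity such as measurability or continuity, but here the only regularity at hand is monotonicity on the cone. The workaround is to reduce to monotone additive functions along the two generating rays of the cone, where the elementary one-variable result (every monotone additive function on an interval is linear) applies directly, and then patch them by additivity across the full plane.
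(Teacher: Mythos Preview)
Your proof is correct and follows essentially the same route as the paper's: determining $f$ along the two directions $(y,1)$ and $(x,x)$ via monotone solutions of the Cauchy equation, then using $(2,2)\sim(I_2,I_2)$ to force the exponents to sum to $1$, and finally $(2,1)+(1,2)\preorderge(3,3)$ to exclude $\alpha\in(0,1)$. The only cosmetic difference is that you pass to log coordinates and phrase everything in terms of an additive function monotone on a cone, whereas the paper works directly with the multiplicative functions $g(x)=f(x,x)$ and $h(y)=f(y,1)$; your explicit computation of the full scalar preorder is a bit more than the paper uses but does no harm.
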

\begin{proof}
Let $g(x)=f(x,x)$ for $x\in\mathbb{R}_{\ge 0}$ and $h(y)=f(y,1)$ for $y\in\mathbb{R}_{>0}$. By multiplicativity of $f$, both $g$ and $h$ satisfy the Cauchy functional equation $g(x_1x_2)=g(x_1)g(x_2)$ and $h(y_1y_2)=h(y_1)h(y_2)$ and in addition
\begin{equation}\label{eq:fnumbers}
f(p,q)=f(q,q)f\left(\left(\frac{p}{q},1\right)\right)=g(q)h\left(\frac{p}{q}\right).
\end{equation}

If $0\le x_1\le x_2$, then choosing $T=\frac{x_1}{x_2}\id_{\boundeds(\mathbb{C})}$ in Definition~\ref{def:preorderdef} shows that $(x_1,x_1)\preorderle(x_2,x_2)$, thus $g$ must be monotone increasing. Therefore $g(x)=x^\beta$ for some $\beta\ge 0$.

If $0<y_1\le y_2$, then choosing $T=\id_{\boundeds(\mathbb{C})}$ in Definition~\ref{def:preorderdef} shows that $(y_1,1)\preorderle(y_2,1)$, therefore $h$ is monotone increasing. This implies $h(y)=y^\alpha$ for some $\alpha\ge 0$. By \eqref{eq:fnumbers} we have
\begin{equation}
f(p,q)=q^\beta\left(\frac{p}{q}\right)^\alpha=p^\alpha q^{\beta-\alpha}
\end{equation}

Consider the dichotomies $(1,1)+(1,1)$ on $\mathbb{C}^2$ and $(2,2)$ on $\mathbb{C}$. Then we have both $(1,1)+(1,1)\preorderle(2,2)$ and $(1,1)+(1,1)\preorderge(2,2)$, as can be seen by choosing $T=\Tr$ and the map
\begin{equation}
T(1)=\frac{1}{2}\begin{bmatrix}
1 & 0  \\
0 & 1
\end{bmatrix},
\end{equation}
respectively. This implies
\begin{equation}
2=f((1,1)+(1,1))=f(2,2)=g(2)=2^\beta,
\end{equation}
i.e. $\beta=1$.

Finally, we have $(2,1)+(1,2)\preorderge(3,3)$ in the semiring (choose $T=\Tr$), therefore
\begin{equation}
2^\alpha+2^{1-\alpha}\ge 3,
\end{equation}
which implies $\alpha\notin(0,1)$.
\end{proof}
\begin{remark}\label{rem:relativesubmajorization}
If we used relative submajorization as the preorder, then $(1,1)$ would be greater than $(1,2)$, therefore monotonicity would require $1=f(1,1)\ge f(1,2)=2^{1-\alpha}$, i.e. $1\le\alpha$.
\end{remark}

Now we can complete the classification of elements in the asymptotic spectrum of classical dichotomies.
\begin{theorem}\label{thm:cmonotones}
$\Delta(\cdichotomies,\preorderle)=\setbuild{f_\alpha}{\alpha\in\{0\}\cup[1,\infty)}$, where the functions $f_\alpha:\cdichotomies\to\mathbb{R}_{\ge 0}$ are given by
\begin{equation}\label{eq:cmonotone}
f_\alpha(p,q)=\sum_{x\in\mathcal{X}}p_x^\alpha q_x^{1-\alpha},
\end{equation}
with the convention $0^0=1$.
\end{theorem}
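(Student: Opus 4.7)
The plan is to combine Proposition~\ref{prop:cmultiplicative}, which pins down the value of any spectral point $f$ on the one-dimensional (scalar) dichotomies, with the elementary decomposition of a classical dichotomy into a direct sum of scalar dichotomies. Additivity of $f$ then forces the formula \eqref{eq:cmonotone}.

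First I would verify the inclusion $\setbuild{f_\alpha}{\alpha\in\{0\}\cup[1,\infty)}\subseteq\Delta(\cdichotomies,\preorderle)$. For $\alpha\ge 1$ this is exactly Proposition~\ref{prop:cmonotones}. For $\alpha=0$ the convention $0^0=1$ collapses the defining sum to $f_0(p,q)=\sum_{x\in\mathcal{X}}q_x=\Tr q$; multiplicativity on tensor products, additivity on direct sums, and normalization $f_0(1,1)=1$ are immediate, while monotonicity is the observation that if $(p,q)\preorderge(r,s)$ via a trace-nonincreasing $T$ with $T(q)=s$, then $\Tr q\ge\Tr T(q)=\Tr s$.

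For the reverse inclusion, let $f\in\Delta(\cdichotomies,\preorderle)$. Proposition~\ref{prop:cmultiplicative} supplies an $\alpha\in\{0\}\cup[1,\infty)$ such that $f(p,q)=p^\alpha q^{1-\alpha}$ on every dichotomy on $\mathbb{C}$. Now I would let $(p,q)$ be an arbitrary classical dichotomy on $\mathbb{C}^{\mathcal{X}}$; since $p$ and $q$ commute, up to equivalence both are simultaneously diagonal in the standard basis, giving a direct-sum decomposition $(p,q)=\sum_{x\in\mathcal{X}}(p_x,q_x)$ in $\cdichotomies$, each summand being a dichotomy on $\mathbb{C}$ (with $q_x=0$ forcing $p_x=0$ by the support condition, so such summands represent the additive neutral element). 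Additivity of $f$ then yields
\begin{equation}
f(p,q)=\sum_{x\in\mathcal{X}}f(p_x,q_x)=\sum_{x\in\mathcal{X}}p_x^\alpha q_x^{1-\alpha}=f_\alpha(p,q),
\end{equation}
where zero summands are handled consistently by $f(0,0)=0$ and the stated conventions.

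I do not anticipate a genuine obstacle here: the substantive work has already been done in Propositions~\ref{prop:cmonotones} and~\ref{prop:cmultiplicative}, and what remains is the simultaneous-diagonalization of commuting operators together with additivity. The only mild subtlety is cleanly disposing of the boundary case $\alpha=0$ and the vanishing summands in the decomposition, both absorbed by the conventions $0^0=1$ and $0^\alpha 0^{1-\alpha}=0$.
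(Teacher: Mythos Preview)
Your argument has a genuine gap at the direct-sum decomposition step. A classical dichotomy $(p,q)$ can have coordinates with $p_x=0$ while $q_x>0$ (the definition only demands $\support p\subseteq\support q$ together with the global condition $p=0\Rightarrow q=0$). For such an $x$ the scalar pair $(p_x,q_x)=(0,q_x)$ is \emph{not} an element of $\cdichotomies$: it violates precisely the clause ``if $\rho=0$ then $\sigma=0$''. Hence the identity $(p,q)=\sum_{x\in\mathcal{X}}(p_x,q_x)$ is not an equation inside the semiring in general, and you cannot apply additivity of $f$ term by term. Your parenthetical only disposes of the harmless direction ($q_x=0\Rightarrow p_x=0$); the problematic direction is the opposite one.

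This is exactly why the paper's proof does not stop at the additive decomposition. It first observes, as you do, that $f$ is determined on the subsemiring $S_0$ generated by scalar dichotomies (equivalently, on classical pairs with $p_x>0$ whenever $q_x>0$). For a general $(p,q)\in\cdichotomies$ it then runs a squeeze argument: the upper bound comes from $(p+\epsilon q,q)\preorderge(p,q)$, the lower bound from the completely positive trace-preserving map $T(x)=\epsilon\,\frac{q}{\norm[1]{q}}\Tr x+(1-\epsilon)x$, which yields $(p,q)\preorderge\bigl(\epsilon\,\tfrac{\Tr p}{\norm[1]{q}}\,q+(1-\epsilon)p,\,q\bigr)$. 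Both comparison dichotomies lie in $S_0$, so $f$ evaluates on them as $f_\alpha$; letting $\epsilon\to 0^+$ forces $f(p,q)=f_\alpha(p,q)$. To repair your proof you need either this perturbation step or some equivalent device that moves a general classical dichotomy into $S_0$ in a way controlled by the preorder.
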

\begin{proof}
We saw in Proposition~\ref{prop:cmonotones} that $f_\alpha\in\Delta(\cdichotomies,\preorderle)$ for $\alpha\ge 1$. The map $f_0(p,q)=\Tr q$ is clearly additive, multiplicative, normalized and monotone under trace nonincreasing maps. We show that these monotones exhaust the set of monotone homomorphisms allowed by Proposition~\ref{prop:cmultiplicative}.

Let $f\in\Delta(\cdichotomies,\preorderle)$. The semiring $S_0$ is additively generated by dichotomies on $\mathbb{C}$. More precisely, elements can be represented by pairs $(p,q)$ where
\begin{align}\label{eq:pqrepresentative}
p & = \sum_{x\in\mathcal{X}}p_x\ketbra{x}{x}  \\
q & = \sum_{x\in\mathcal{X}}q_x\ketbra{x}{x}
\end{align}
for some finite set $\mathcal{X}$ and strictly positive numbers $(p_x)_{x\in\mathcal{X}},(q_x)_{x\in\mathcal{X}}$, and with the addition in the semiring we may write this as
\begin{equation}
(p,q)=\sum_{x\in\mathcal{X}}(p_x,q_x).
\end{equation}
This means that a monotone semiring homomorphism $f:S_0\to\mathbb{R}_{\ge 0}$ is in fact uniquely determined by its values on dichotomies on $\mathbb{C}$. Thus the restriction of $f$ to $S_0$ must agree with $f_\alpha$ for some $\alpha$ in the indicated range.

Similarly, elements of $\cdichotomies$ are also of the form \eqref{eq:pqrepresentative} with $0\le p_x<q_x$ allowed as long as $p\neq 0$. For every $\epsilon>0$ the inequality $(p+\epsilon q,q)\preorderge(p,q)$ holds (choose $T=\id$), therefore
\begin{equation}\label{eq:specialupperbound}
f(p,q)\le\liminf_{\epsilon\to 0+}f(p+\epsilon q,q)
\end{equation}
Next consider the map
\begin{equation}
T(x)=\epsilon\frac{q}{\norm[1]{q}}\Tr(x)+(1-\epsilon)x,
\end{equation}
where $\epsilon\in[0,1]$. $T$ is (completely) positive and trace preserving, therefore by Definition~\ref{def:preorderdef} we have
\begin{equation}
(p,q)\preorderge(T(p),T(q))=\left(\epsilon\frac{q}{\norm[1]{q}}\Tr(p)+(1-\epsilon)p,q\right).
\end{equation}
$f$ is monotone, therefore
\begin{equation}\label{eq:speciallowerbound}
f(p,q)\ge\limsup_{\epsilon\to0+}f\left(\epsilon\frac{q}{\norm[1]{q}}\Tr(p)+(1-\epsilon)p,q\right).
\end{equation}
The right hand sides of \eqref{eq:specialupperbound} and \eqref{eq:speciallowerbound} involve only elements of $S_0$, therefore we can evaluate $f$ on them as $f_\alpha$, which leads to
\begin{equation}
\begin{split}
f(p,q)
 & \le \liminf_{\epsilon\to 0+}f_\alpha(p+\epsilon q,q)  \\
 & = \lim_{\epsilon\to 0+}\sum_{x\in\support q}(p_x+\epsilon q_x)^\alpha q_x^{1-\alpha}  \\
 & = \sum_{x\in\support q}p_x^\alpha q_x^{1-\alpha}
\end{split}
\end{equation}
and
\begin{equation}
\begin{split}
f(p,q)
 & \ge \limsup_{\epsilon\to0+}f_\alpha\left(\epsilon\frac{q}{\norm{q}}\Tr(p)+(1-\epsilon)p,q\right)  \\
 & \ge \lim_{\epsilon\to0+}\sum_{x\in\support q}\left(\epsilon\frac{q_x}{\norm[1]{q}}\Tr(p)+(1-\epsilon)p_x\right)^\alpha q_x^{1-\alpha}  \\
 & = \sum_{x\in\support q}p_x^\alpha q_x^{1-\alpha},
\end{split}
\end{equation}
in both cases with the convention $0^0:=\lim_{\epsilon\to0+}\epsilon^0=1$.
\end{proof}

In the following theorem we reduce the classification of monotone semiring homomorphisms on quantum dichotomies to the classical ones with the help of the pinching map.
\begin{theorem}\label{thm:qmonotones}
Consider the functions $f_\alpha:\dichotomies\to\mathbb{R}_{\ge 0}$ given by
\begin{equation}\label{eq:qmonotone}
\tilde{f}_\alpha(\rho,\sigma)=\begin{cases}
\sandwichedquasientropy{\alpha}{\rho}{\sigma}=\Tr\left(\sigma^{\frac{1-\alpha}{2\alpha}}\rho\sigma^{\frac{1-\alpha}{2\alpha}}\right)^\alpha & \text{if $\alpha\ge 1$}  \\
\Tr\sigma & \text{if $\alpha=0$.}
\end{cases}
\end{equation}
Then $\Delta(\dichotomies,\preorderle)=\setbuild{\tilde{f}_\alpha}{\alpha\in\{0\}\cup[1,\infty)}$.
\end{theorem}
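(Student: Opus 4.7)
The plan is to obtain both the direct and converse inclusions simultaneously, by combining the classical classification (Theorem~\ref{thm:cmonotones}) with the surjectivity result (Corollary~\ref{cor:surjective}) and a pinching-based uniqueness argument. Let $f\in\Delta(\dichotomies,\preorderle)$ be arbitrary; its restriction to $\cdichotomies$ is, by Theorem~\ref{thm:cmonotones}, equal to $f_\alpha$ for some $\alpha\in\{0\}\cup[1,\infty)$. I would show that this restriction already determines $f(\rho,\sigma)$ for every quantum dichotomy and forces it to equal $\tilde{f}_\alpha(\rho,\sigma)$. Corollary~\ref{cor:surjective} then guarantees that such an extension exists for every $\alpha$, so $\tilde{f}_\alpha$ must itself be that extension and in particular lies in $\Delta(\dichotomies,\preorderle)$.

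The core step is to sandwich $f(\rho,\sigma)^n$ between two classical values. Since $\pinching{\sigma^{\otimes n}}$ is CPTP and fixes $\sigma^{\otimes n}$, one application of Definition~\ref{def:preorderdef} gives $(\rho^{\otimes n},\sigma^{\otimes n})\preorderge(\pinching{\sigma^{\otimes n}}(\rho^{\otimes n}),\sigma^{\otimes n})$. Conversely, the pinching inequality $\rho^{\otimes n}\le|\spectrum(\sigma^{\otimes n})|\pinching{\sigma^{\otimes n}}(\rho^{\otimes n})$ combined with $T=\id$ yields $(|\spectrum(\sigma^{\otimes n})|\pinching{\sigma^{\otimes n}}(\rho^{\otimes n}),\sigma^{\otimes n})\preorderge(\rho^{\otimes n},\sigma^{\otimes n})$. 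Both outer dichotomies are classical because $\pinching{\sigma^{\otimes n}}(\rho^{\otimes n})$ commutes with $\sigma^{\otimes n}$ and, owing to the support condition of the original dichotomy, is supported in $\support\sigma^{\otimes n}$. Applying $f$, using multiplicativity $f(\rho,\sigma)^n=f(\rho^{\otimes n},\sigma^{\otimes n})$, and noting that $f_\alpha$ is $\alpha$-homogeneous in the first slot yields
\begin{equation*}
f_\alpha(\pinching{\sigma^{\otimes n}}(\rho^{\otimes n}),\sigma^{\otimes n})\le f(\rho,\sigma)^n\le|\spectrum(\sigma^{\otimes n})|^\alpha f_\alpha(\pinching{\sigma^{\otimes n}}(\rho^{\otimes n}),\sigma^{\otimes n}).
\end{equation*}

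For $\alpha\ge 1$, I would next rewrite the classical quantity in a form suitable for Proposition~\ref{prop:pinchingpowerlimit}. Using that $\pinching{\sigma^{\otimes n}}$ commutes with left and right multiplication by functions of $\sigma^{\otimes n}$, setting $X=\sigma^{(1-\alpha)/2\alpha}\rho\sigma^{(1-\alpha)/2\alpha}$ and exploiting the commutativity of $\pinching{\sigma^{\otimes n}}(\rho^{\otimes n})$ with $\sigma^{\otimes n}$ gives
\begin{equation*}
f_\alpha(\pinching{\sigma^{\otimes n}}(\rho^{\otimes n}),\sigma^{\otimes n})=\Tr\pinching{\sigma^{\otimes n}}(X^{\otimes n})^\alpha.
\end{equation*}
Taking $n$-th roots and $n\to\infty$, Proposition~\ref{prop:pinchingpowerlimit} provides the limit $\Tr X^\alpha=\sandwichedquasientropy{\alpha}{\rho}{\sigma}$, while $|\spectrum(\sigma^{\otimes n})|^{1/n}\to 1$. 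Squeezing then forces $f(\rho,\sigma)=\tilde{f}_\alpha(\rho,\sigma)$. The $\alpha=0$ case is simpler because $f_0$ depends only on its second argument, so both classical bounds reduce to $(\Tr\sigma)^n$ and give $f(\rho,\sigma)=\Tr\sigma=\tilde{f}_0(\rho,\sigma)$ directly.

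The main delicate point is the identity $\sigma^c\pinching{\sigma}(Y)\sigma^c=\pinching{\sigma}(\sigma^c Y\sigma^c)$ for functions $\sigma^c$ of $\sigma$, which lets us convert the classical R\'enyi-type expression into a form to which Proposition~\ref{prop:pinchingpowerlimit} applies; once this is in hand the squeeze is routine. A pleasant byproduct of proceeding through Corollary~\ref{cor:surjective} is that we never invoke the data processing inequality for $\sandwichedquasientropy{\alpha}{\cdot}{\cdot}$, so the argument constitutes a new proof of DPI for sandwiched R\'enyi divergences of order $\alpha\ge 1$.
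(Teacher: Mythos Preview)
Your proposal is correct and follows essentially the same route as the paper: restrict to classical, invoke Corollary~\ref{cor:surjective} for existence, and prove uniqueness via the pinching sandwich combined with Proposition~\ref{prop:pinchingpowerlimit}. The only cosmetic difference is that you squeeze $f(\rho,\sigma)^n$ between two classical values, whereas the paper equivalently squeezes the classical pinched value between $\tilde{f}(\rho,\sigma)^n$ and $|\spectrum(\sigma^{\otimes n})|^{-\alpha}\tilde{f}(\rho,\sigma)^n$; both rearrangements yield the same limit and the same computation.
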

\begin{proof}
If $\tilde{f}\in\Delta(\dichotomies,\preorderle)$, then its restriction to $\cdichotomies$ is an element of $\Delta(\cdichotomies,\preorderle)$. These are of the form $f_\alpha$ for $\alpha\in\{0\}\cup[1,\infty)$ (Theorem~\ref{thm:cmonotones}). Conversely, by Corollary~\ref{cor:surjective} the function $f_\alpha$ has at least one extension $\tilde{f}\in\Delta(\dichotomies,\preorderle)$. We show that that there is at most one extension for every possible $\alpha$ and that it agrees with $\tilde{f}_\alpha$.

Suppose that $\tilde{f}\in\Delta(\dichotomies,\preorderle)$ is an extension of $f_\alpha$ with $\alpha\in\{0\}\cup[1,\infty)$, and let $(\rho,\sigma)$ be a dichotomy. For every $n\in\mathbb{N}$ the following inequalities hold in $\dichotomies$:
\begin{equation}
(\rho^{\otimes n},\sigma^{\otimes n})\preorderge(\pinching{\sigma^{\otimes n}}(\rho^{\otimes n}),\sigma^{\otimes n})\preorderge(|\spectrum(\sigma^{\otimes n})|^{-1}\rho^{\otimes n},\sigma^{\otimes n}).
\end{equation}
The first one follows by applying the pinching map $\pinching{\sigma^{\otimes n}}$ to both parts of the dichotomy and using that it fixes $\sigma^{\otimes n}$ (see \ref{it:pinchingcpt} and \ref{it:pinchingfix} of Lemma~\ref{lem:pinchingproperties}). In the second step we decrease the first element, as can be seen from \ref{it:pinchinginequality} of Lemma~\ref{lem:pinchingproperties}.

Since $\tilde{f}$ is multiplicative and monotone, we have
\begin{equation}\label{eq:qcestimate}
\begin{split}
\tilde{f}(\rho,\sigma)
 & = \left(\tilde{f}(\rho^{\otimes n},\sigma^{\otimes n})\right)^{1/n}  \\
 & \ge \left(\tilde{f}(\pinching{\sigma^{\otimes n}}(\rho^{\otimes n}),\sigma^{\otimes n})\right)^{1/n}  \\
 & \ge \left(\tilde{f}(|\spectrum(\sigma^{\otimes n})|^{-1}\rho^{\otimes n},\sigma^{\otimes n})\right)^{1/n}  \\
 & = \left(\tilde{f}(|\spectrum(\sigma^{\otimes n})|^{-1},1)\right)^{1/n}\tilde{f}(\rho,\sigma).
\end{split}
\end{equation}
The dichotomies $(\pinching{\sigma^{\otimes n}}(\rho^{\otimes n}),\sigma^{\otimes n})$ and $(|\spectrum(\sigma^{\otimes n})|^{-1},1)$ are classical (\ref{it:pinchingcommutes} of Lemma~\ref{lem:pinchingproperties}), therefore $\tilde{f}$ and $f_\alpha$ agree on them. Since $1\le|\spectrum(\sigma^{\otimes n})|\le(n+1)^{\dim\mathcal{H}}$, we have
\begin{equation}\label{eq:spectrumsizelimit}
\begin{split}
\lim_{n\to\infty}\left(\tilde{f}(|\spectrum(\sigma^{\otimes n})|^{-1},1)\right)^{1/n}
 & = \lim_{n\to\infty}\left(f_\alpha(|\spectrum(\sigma^{\otimes n})|^{-1},1)\right)^{1/n}  \\
 & = \lim_{n\to\infty}|\spectrum(\sigma^{\otimes n})|^{-\alpha/n}=1.
\end{split}
\end{equation}

For $\alpha=0$ we get from \eqref{eq:qcestimate} with $n=1$ the chain of inequalities $\tilde{f}(\rho,\sigma)\ge\tilde{f}(\pinching{\sigma}(\rho),\sigma)\ge\tilde{f}(\rho,\sigma)$, therefore $\tilde{f}(\rho,\sigma)=f_0(\pinching{\sigma}(\rho),\sigma)=\Tr\sigma$.

For $\alpha\ge 1$, \eqref{eq:qcestimate} and \eqref{eq:spectrumsizelimit} imply
\begin{equation}
\begin{split}
\tilde{f}(\rho,\sigma)
 & = \lim_{n\to\infty}\left(\tilde{f}(\pinching{\sigma^{\otimes n}}(\rho^{\otimes n}),\sigma^{\otimes n})\right)^{1/n}  \\
 & = \lim_{n\to\infty}\left(\Tr\pinching{\sigma^{\otimes n}}(\rho^{\otimes n})^\alpha(\sigma^{\otimes n})^{1-\alpha}\right)^{1/n}  \\
 & = \lim_{n\to\infty}\left(\Tr\left((\sigma^{\otimes n})^{\frac{1-\alpha}{2\alpha}}\pinching{\sigma^{\otimes n}}(\rho^{\otimes n})(\sigma^{\otimes n})^{\frac{1-\alpha}{2\alpha}}\right)^{\alpha}\right)^{1/n}  \\
 & = \lim_{n\to\infty}\left(\Tr\left(\pinching{\sigma^{\otimes n}}\left((\sigma^{\otimes n})^{\frac{1-\alpha}{2\alpha}}\rho^{\otimes n}(\sigma^{\otimes n})^{\frac{1-\alpha}{2\alpha}}\right)\right)^{\alpha}\right)^{1/n}  \\
 & = \lim_{n\to\infty}\left(\Tr\left(\pinching{\sigma^{\otimes n}}\left((\sigma^{\frac{1-\alpha}{2\alpha}}\rho\sigma^{\frac{1-\alpha}{2\alpha}})^{\otimes n}\right)\right)^{\alpha}\right)^{1/n}  \\
 & = \Tr\left(\sigma^{\frac{1-\alpha}{2\alpha}}\rho\sigma^{\frac{1-\alpha}{2\alpha}}\right)^{\alpha}.
\end{split}
\end{equation}
The second and third equalities use that $\pinching{\sigma^{\otimes n}}(\rho^{\otimes n})$ and $\sigma^{\otimes n}$ commute, the fourth one uses that $\sigma^{\frac{1-\alpha}{2\alpha}}$ commutes with the projections appearing in the pinching map, and the last step uses Proposition~\ref{prop:pinchingpowerlimit}.
\end{proof}
For normalized dichotomimes $(\rho,\sigma)$ the function $f_\alpha$ in \eqref{eq:qmonotone} is the sandwiched quasi-entropy \cite{wilde2014strong} and can be expressed in terms of the sandwiched R\'enyi divergence \cite{muller2013quantum,wilde2014strong} as
\begin{equation}\label{eq:falphasandwiched}
f_\alpha(\rho,\sigma)=2^{(\alpha-1)\sandwicheddivergence{\alpha}{\rho}{\sigma}}.
\end{equation}
Note that when $\Tr\rho\neq 1$, it is customary to include a $-\frac{1}{\alpha-1}\log\Tr\rho$ term in the definition of the sandwiched R\'enyi divergence \cite{muller2013quantum}.

\section{Rates for probabilistic asymptotic pair transformations}\label{sec:rates}

Now that we have an explicit description of asymptotic spectrum of the semiring of dichotomies, we specialize the rate formula \eqref{eq:generalrateformula}. First let us consider normalized dichotomies $(\rho,\sigma)$ and $(\rho',\sigma')$ on $\mathcal{H}$, $\mathcal{H}'$. Then $(\rho,\sigma)\preorderge 1$ and $(\rho',\sigma')\preorderge 1$ (choose $T=\Tr$ in Definition~\ref{def:preorderdef}), therefore \eqref{eq:specialrateformula} can be used. Noting that $f_0(\rho,\sigma)=1=f_0(\rho',\sigma')$, the rate is given by the same formula for both of the preorders:
\begin{equation}\label{eq:subexponentialrateformula}
R((\rho,\sigma)\to(\rho',\sigma'))=\inf_{\alpha>1}\frac{\log f_\alpha(\rho,\sigma)}{\log f_\alpha(\rho',\sigma')}=\inf_{\alpha>1}\frac{\sandwicheddivergence{\alpha}{\rho}{\sigma}}{\sandwicheddivergence{\alpha}{\rho'}{\sigma'}}.
\end{equation}
This means that there are substochastic maps $T_n:\boundeds(\mathcal{H}^{\otimes n})\to\boundeds({\mathcal{H}'}^{\otimes R((\rho,\sigma)\to(\rho',\sigma'))n+o(n)})$ that map $n$ copies of $\sigma$ to roughly $R((\rho,\sigma)\to(\rho',\sigma'))n$ copies of $\sigma'$ with a probability that decays slower than any exponential, and at the same time maps $n$ copies of $\rho$ to a subnormalized state $\tau$ satisfying $\tau\ge 2^{-o(n)}{\sigma'}^{\otimes R((\rho,\sigma)\to(\rho',\sigma'))n}$. The latter condition in turn implies that the probability of failure as well as the approximation error is $1-2^{-o(n)}$. However, our notion of asymptotic transformations is more restrictive than merely requiring this error since the inequality \eqref{eq:dichotomypreorderdefinequality} implies that even exponentially small components cannot be suppressed completely.

Our rate formula should be contrasted to that of \cite{buscemi2019information} and \cite{wang2019resource}, i.e. the ratio of (Umegaki) relative entropies, which is valid for approximate transformations in the first component with asymptotically vanishing error. The results of \cite{buscemi2019information} show that above the rate $\relativeentropy{\rho}{\sigma}/\relativeentropy{\rho'}{\sigma'}$ the approximation error goes to $1$ exponentially. This is consistent with our rate being in general lower since at rate \eqref{eq:subexponentialrateformula} we can guarantee an error that approaches $1$ slower than any exponential.

In \cite{wang2019resource} dichotomies have been interpreted in terms of a resource theory of asymmetric distinguishability. When considering approximate pair transformations with asymptotically vanishing error, the resource theory is shown to be reversible. The dichotomy $(\ketbra{0}{0},\pi)$ with $\pi=\frac{1}{2}(\ketbra{0}{0}+\ketbra{1}{1})$ serves as a possible unit, referred to as one bit of asymmetric distinguishability, and $\relativeentropy{\rho}{\sigma}$ is the asymptotic value of the dichotomy $(\rho,\sigma)$.

Let us calculate the distillation and dilution rates in the sense of our asymptotic transformations. The spectral points in $\Delta(\dichotomies,\preorderle)$ evaluate to
\begin{equation}
f_\alpha(\ketbra{0}{0},\pi)=\begin{cases}
2^{\alpha-1} & \text{if $\alpha\ge 1$}  \\
1 & \text{if $\alpha=0$},
\end{cases}
\end{equation}
therefore $\sandwicheddivergence{\alpha}{\ketbra{0}{0}}{\pi}=1$. If we wish to distill from $(\rho,\sigma)$ bits of asymmetric distinguishability, then the optimal rate using \eqref{eq:subexponentialrateformula} is
\begin{equation}
R((\rho,\sigma)\to(\ketbra{0}{0},\pi))=\inf_{\alpha>1}\sandwicheddivergence{\alpha}{\rho}{\sigma}=\relativeentropy{\rho}{\sigma},
\end{equation}
which is equal to the rate when an asymptotically vanishing error is allowed. For the reverse task of diluting bits of asymmetric distinguishability to $(\rho,\sigma)$, the rate becomes
\begin{equation}
R((\rho,\sigma)\to(\ketbra{0}{0},\pi))=\inf_{\alpha>1}\frac{1}{\sandwicheddivergence{\alpha}{\rho}{\sigma}}=\frac{1}{\sandwicheddivergence{\infty}{\rho}{\sigma}},
\end{equation}
where $\sandwicheddivergence{\infty}{\rho}{\sigma}=\log\norm[\infty]{\sigma^{-1/2}\rho\sigma^{-1/2}}$ is the max-divergence \cite{renner2008security,datta2009min}. This rate is equal to the inverse of the asymptotic cost in the exact dilution task \cite{wang2019resource}.

We note that the form and the meaning of our rate formula \eqref{eq:subexponentialrateformula} is reminiscent of the zero-exponent limit of the rate formula for bipartite entanglement transformations \cite{jensen2019asymptotic}. In both cases the value corresponds to the rate where the theory of asymptotic spectra guarantees a success probability that decays slower than any exponenital in the number of copies. For entanglement transformations an independent calculation shows that with the same rate it is actually possible to have a success probability going to $1$ \cite{jensen2019asymptotic}, therefore we expect that a similar improvement is possible in the present case as well.

\subsection{Strong converse exponents}

We turn to the asymptotic comparison of possibly unnormalized dichotomies. Recall that $M$ is the set of nonzero dichotomies on $\mathbb{C}$. Every nonzero dichotomy is the product of a normalized one with an element of $M$ in a unique way. Since $M$ consist of invertible elements, we may assume without loss of generality that the initial pair is normalized.

Let $(\rho,\sigma)$ and $(\rho',\sigma')$ be normalized dichotomies and $r,R\ge 0$. The inequality $(\rho,\sigma)\asymptoticge(2^{-R}\rho',2^{-r}\sigma')$ means that there exists a sequence of trace-nonincreasing channels $T_n:\boundeds(\mathcal{H}^{\otimes n})\to\boundeds({\mathcal{H}'}^{\otimes n})$ that transform $\sigma^{\otimes n}$ to ${\sigma'}^{\otimes n}$ exactly with probability $2^{-rn+o(n)}$, and that transform $\rho^{\otimes n}$ to a subnormalized state that is larger than $2^{-Rn+o(n)}{\rho'}^{\otimes n}$. As before, the latter implies a success probability at least $2^{-Rn+o(n)}$ and approximation error at most $1-2^{-Rn+o(n)}$. The optimal exponent is
\begin{equation}\label{eq:converseerrorexponent}
\begin{split}
R^*(r)
 & = \inf\setbuild{R\ge 0}{(\rho,\sigma)\asymptoticge(2^{-R},2^{-r})(\rho',\sigma')}  \\
 & = \inf\setbuild{R\ge 0}{\forall f\in\Delta(\dichotomies,\preorderge):f(\rho,\sigma)\ge f(2^{-R},2^{-r})f(\rho',\sigma')}  \\
 & = \inf\setbuild{R\ge 0}{\forall \alpha>1:\log f_\alpha(\rho,\sigma)\ge -\alpha R-(1-\alpha)r+\log f_\alpha(\rho',\sigma')}  \\
 & = \inf\setbuild{R\ge 0}{\forall \alpha>1:R\ge-\frac{1-\alpha}{\alpha}r-\frac{1}{\alpha}\log f_\alpha(\rho,\sigma)+\frac{1}{\alpha}\log f_\alpha(\rho',\sigma')}  \\
 & = \sup_{\alpha>1}\frac{\alpha-1}{\alpha}\left[r-\sandwicheddivergence{\alpha}{\rho}{\sigma}+\sandwicheddivergence{\alpha}{\rho'}{\sigma'}\right].
\end{split}
\end{equation}
The second equality is an application of Theorem~\ref{thm:spectrumpreorder}, the third equality uses Theorem~\ref{thm:qmonotones} and that $f_0(\rho,\sigma)=f_1(\rho,\sigma)=1$ and $f_0(2^{-R},2^{-r})=2^{-r}\le 1$ and $f_1(2^{-R},2^{-r})=2^{-R}\le 1$, while the last equality uses \eqref{eq:falphasandwiched}.

More generally, we may wish to transform $n$ copies of $(\rho,\sigma)$ to $\kappa n+o(n)$ copies of $(\rho',\sigma')$ (for some $\kappa>0$) with the same probabilities -- in this case $f_\alpha(\rho',\sigma')$ needs to be replaced with $f_\alpha(\rho',\sigma')^\kappa$ in the condition above. This captures the trade-off between the exponents $r,R$ and the rate $\kappa$. The optimal value of any of these can be expressed in terms of the other two:
\begin{equation}
\begin{split}
R^*(\kappa,r)
 & = \inf\setbuild{R\ge0}{\forall\alpha>1:\log f_\alpha(\rho,\sigma)\ge -\alpha R-(1-\alpha)r+\kappa\log f_\alpha(\rho',\sigma')}  \\
 & = \inf\setbuild{R\ge0}{\forall\alpha>1:R\ge -\frac{1-\alpha}{\alpha}r-\frac{1}{\alpha}\log f_\alpha(\rho,\sigma)+\frac{\kappa}{\alpha}\log f_\alpha(\rho',\sigma')}  \\
 & = \sup_{\alpha>1}\frac{\alpha-1}{\alpha}\left[r-\sandwicheddivergence{\alpha}{\rho}{\sigma}+\kappa\sandwicheddivergence{\alpha}{\rho'}{\sigma'}\right].
\end{split}
\end{equation}
\begin{equation}
\begin{split}
\kappa^*(r,R)
 & = \sup\setbuild{\kappa}{\forall\alpha>1:\log f_\alpha(\rho,\sigma)\ge -\alpha R-(1-\alpha)r+\kappa\log f_\alpha(\rho',\sigma')}  \\
 & = \sup\setbuild{\kappa}{\forall\alpha>1:\alpha R+(1-\alpha)r+\log f_\alpha(\rho,\sigma)\ge \kappa\log f_\alpha(\rho',\sigma')}  \\
 & = \inf_{\alpha>1}\frac{\alpha R+(1-\alpha)r+\log f_\alpha(\rho,\sigma)}{\log f_\alpha(\rho',\sigma')}  \\
 & = \inf_{\alpha>1}\frac{\frac{\alpha}{\alpha-1} R-r+\sandwicheddivergence{\alpha}{\rho}{\sigma}}{\sandwicheddivergence{\alpha}{\rho'}{\sigma'}}
\end{split}
\end{equation}
\begin{equation}
\begin{split}
r^*(\kappa,R)
 & = \sup\setbuild{R\ge0}{\forall\alpha>1:\log f_\alpha(\rho,\sigma)\ge -\alpha R-(1-\alpha)r+\kappa\log f_\alpha(\rho',\sigma')}  \\
 & = \sup\setbuild{R\ge0}{\forall\alpha>1:\alpha R+\log f_\alpha(\rho,\sigma)-\kappa\log f_\alpha(\rho',\sigma')\ge (\alpha-1)r}  \\
 & = \inf_{\alpha>1}\frac{\alpha}{\alpha-1} R+\sandwicheddivergence{\alpha}{\rho}{\sigma}-\kappa\sandwicheddivergence{\alpha}{\rho'}{\sigma'}
\end{split}
\end{equation}

\subsection{Hypothesis testing}

We consider the hypothesis testing problem of distinguishing two sources of independent and identically distributed copies of one of two quantum states $\rho$ and $\sigma$ on $\mathcal{H}$. The obeserver is allowed to perform a measurement on $n$ copies, described by a two-outcome POVM $(\Pi_n,I-\Pi_n)$ where $\Pi_n\in\boundeds(\mathcal{H}^n)$ satisfies $0\le\Pi_n\le I$. The outcome associated with $\Pi_n$ results in accepting the null hypothesis $\rho$, whereas alternative hypothesis gets accepted upon obtaining the other outcome.

A Type I error occurs when the state in question was $\rho$ but the observer finds that it was $\sigma$. This happens with probability $\alpha_n(\Pi_n):=\Tr(\rho^{\otimes n}(I-\Pi_n))$. A Type II error occurs in the opposite case, when the state was $\sigma$ but the null hypothesis is accepted. This happens with probability $\beta_n(\Pi_n)=\Tr(\sigma^{\otimes n}\Pi_n)$. For every $n$ there is a trade-off between the probabilities of the two kinds of errors and we are interested in the possible behaviors of both probabilities in the limit $n\to\infty$.

The quantum Stein's lemma \cite{hiai1991proper} says that for all $\epsilon\in(0,1)$, for every sequence of measurements $(\Pi_n)_{n\in\mathbb{N}}$ under the condition $\alpha_n(\Pi_n)\le\epsilon$ the Type II error probability satisfies
\begin{equation}\label{eq:Steinlemma}
\limsup_{n\to\infty}-\frac{1}{n}\log\beta_n(\Pi_n)\ge\relativeentropy{\rho}{\sigma},
\end{equation}
and there is a sequence of measurements attaining this value as a limit. The strong converse property, proved in \cite{ogawa2005strong}, states that for any sequence that does not satisfy \eqref{eq:Steinlemma} the probability of the Type I error necessarily converges to $1$ exponentially fast. The smallest possible exponent, called the strong converse exponent was found in \cite{mosonyi2015quantum}. We now show how one can obtain the same result with our methods.

First note that there is a bijection between two-outcome measurements (tests) on a Hilbert space $\mathcal{H}$ and completely positive trace-nonincreasing maps $T:\boundeds(\mathcal{H})\to\boundeds(\mathbb{C})\simeq\mathbb{C}$. Indeed, given a POVM $(\Pi,I-\Pi)$ on $\mathcal{H}$ we can define the map $T(x)=\Tr(x\Pi)$ and conversely, any linear map $T:\boundeds(\mathcal{H})\to\mathbb{C}$ is of the form $T(x)=\Tr(xA)$ for some $A\in\boundeds(\mathcal{H})$, and such a map is (completely) positive iff $A\ge 0$ and trace-nonincreasing iff $A\le I$. For this reason, when $(\rho,\sigma)$ is a normalized dichotomy on $\mathcal{H}$ and $(a,b)$ is a dichotomy on $\mathbb{C}$, we may restate the condition for $(\rho,\sigma)\preorderge(a,b)$ as $\exists\Pi\in\boundeds(\mathcal{H})$, $0\le\Pi\le I$ such that
\begin{align}
1-\alpha(\Pi)=\Tr\rho\Pi & \ge a  \\
\beta(\Pi)=\Tr\sigma\Pi & = b.
\end{align}
Accordingly, the asymptotic ordering $(\rho,\sigma)\asymptoticge(2^{-R},2^{-r})$ means that there is a sequence of measurement operators $(\Pi_n)_{n\in\mathbb{N}}$ such that
\begin{align}
1-\alpha_n(\Pi_n) & \ge 2^{-Rn+o(n)}  \\
\beta_n(\Pi_n) & = 2^{-rn+o(n)}.
\end{align}
We obtain the strong converse exponent as a function of $r\ge0$ by specializing \eqref{eq:converseerrorexponent} to the target pair $(1,1)$:
\begin{equation}
R^*(r) = \sup_{\alpha>1}\frac{\alpha-1}{\alpha}\left[r-\sandwicheddivergence{\alpha}{\rho}{\sigma}\right].
\end{equation}

\subsection{Work-assisted transformations}

In the resource theory approach to thermodynamics, one fixes a background inverse temperature $\beta$ and models thermal operations as energy-preserving unitaries acting jointly on the system in question and an arbitrary heat bath at inverse temperature $\beta$. Gibbs-preserving maps are a convenient relaxation of thermal operations, and form the free operations in the resource theory of athermality. Classically, the possible state transformations are identical for these two choices of allowed operations \cite{horodecki2013fundamental}, whereas in the quantum setting Gibbs-preserving maps are strictly more powerful than thermal operations \cite{faist2015gibbs}.

Consider a quantum system with Hilbert space $\mathcal{H}$ and Hamiltonian $H\in\boundeds(\mathcal{H})$. The Gibbs state at temperature $\beta$ is $\gamma_{H,\beta}=\frac{1}{Z(\beta)}2^{-\beta H}$ where the normalizing factor $Z(\beta)=\Tr 2^{-\beta H}$ is the partition function (note that in statistical mechanics the base of logarithms and exponentials is usually $e$ -- all the formulas below remain valid with this choice). Note that a linear map preserves $\gamma_{H,\beta}$ iff it preserves $2^{-\beta H}$, therefore we will omit the normalizing factor.

Let $\mathcal{H}_B=\mathbb{C}^2$ be a second Hilbert space modelling a battery with Hamiltonian
\begin{equation}
H_B(w)=\begin{bmatrix}
0 & 0  \\
0 & w
\end{bmatrix}=w\ketbra{1}{1}.
\end{equation}
Changing the state of the battery from $\ketbra{1}{1}$ to $\ketbra{0}{0}$ can be interpreted as drawing work $w$ from it. It may happen that the transformation $\rho\to\rho'$ becomes thermodynamically possible if at the same time we draw some amount $w$ of work from a battery, where the two-component system is described by the Hamiltonian $H\otimes\id_{\mathbb{C}^2}+\id_{\mathcal{H}}\otimes H_B(w)$. The smallest such $w$ is the work cost of the transformation.

Asymptotically, if we allow joint Gibbs-preserving maps that succeed with a probability that decays slower than any exponential, then the condition for a transformation is
\begin{equation}
f_\alpha(\rho\otimes\ketbra{1}{1},2^{-\beta H}\otimes 2^{-\beta H_B(w)})\ge f_\alpha(\rho'\otimes\ketbra{0}{0},2^{-\beta H}\otimes 2^{-\beta H_B(w)})
\end{equation}
for all $\alpha\in\{0\}\cup[1,\infty)$. $f_0$ and $f_1$ are the traces of the second and first component of the dichotomy, which evaluate to the same value on both sides. Using multiplicativity of $f_\alpha$ we can write the asymptotic work cost per copy as
\begin{equation}\label{eq:workcost}
\begin{split}
w^*
 & = \inf\setbuild{w\in\mathbb{R}}{\forall\alpha>1:2^{-(1-\alpha)\beta w}f_\alpha(\rho,2^{-\beta H})\ge f_\alpha(\rho',2^{-\beta H})}  \\
 & = \inf\setbuild{w\in\mathbb{R}}{\forall\alpha>1:(\alpha-1)\beta w+\log f_\alpha(\rho,2^{-\beta H})\ge \log f_\alpha(\rho',2^{-\beta H})}  \\
 & = \sup_{\alpha>1}\frac{1}{(\alpha-1)\beta}\log f_\alpha(\rho',2^{-\beta H})-\frac{1}{(\alpha-1)\beta}\log f_\alpha(\rho,2^{-\beta H})  \\
 & = \sup_{\alpha>1}\frac{1}{\beta}\sandwicheddivergence{\alpha}{\rho'}{2^{-\beta H}}-\frac{1}{\beta}\sandwicheddivergence{\alpha}{\rho}{2^{-\beta H}}.
\end{split}
\end{equation}

We note that
\begin{equation}
\frac{1}{\beta}\relativeentropy{\rho}{2^{-\beta H}}=\frac{1}{\beta}\Tr\rho(\log\rho-\log 2^{-\beta H})=\frac{1}{\beta}\Tr\rho(\beta H+\log\rho)=E-\frac{1}{\beta}\entropy(\rho)
\end{equation}
is the Helmholtz free energy, therefore $\frac{1}{\beta}\sandwicheddivergence{\alpha}{\rho}{2^{-\beta H}}$ may be thought of as a free energy of order $\alpha$ \cite{brandao2015second}. When the state $\rho$ is the Gibbs state $\frac{1}{Z(\beta)}2^{-\beta H}$, the R\'enyi divergences become independent of $\alpha$ and are equal to $-\log Z(\beta)$. Since in general $\alpha\mapsto\sandwicheddivergence{\alpha}{\rho}{2^{-\beta H}}$ is an increasing function, we can evaluate \eqref{eq:workcost} if either the initial or the final state is the Gibbs state. If $\rho=\gamma_{H,\beta}$ the work cost \eqref{eq:workcost} becomes
\begin{equation}
w^*=\frac{1}{\beta}\sandwicheddivergence{\infty}{\rho'}{2^{-\beta H}}+\frac{1}{\beta}\log Z(\beta)=\frac{1}{\beta}\sandwicheddivergence{\infty}{\rho'}{\gamma_{H,\beta}}.
\end{equation}
In contrast, if the final state is $\rho'=\gamma_{H,\beta}$, \eqref{eq:workcost} simplifies to
\begin{equation}
w^*=-\frac{1}{\beta}\log Z(\beta)-\frac{1}{\beta}\relativeentropy{\rho}{2^{-\beta H}}=-\frac{1}{\beta}\relativeentropy{\rho}{\gamma_{H,\beta}},
\end{equation}
where the negative sign indicates that work is extracted in the process.

More generally, suppose that the transformation is probabilistic and approximate, while it still preserves the Gibbs state exactly. If the success probability is allowed to decay as $2^{-Rn}$ for some $R\ge0$, then the relevant inequality is $(\rho\otimes\ketbra{1}{1},2^{-\beta H}\otimes 2^{-\beta H_B(w)})\asymptoticge(2^{-R}\rho'\otimes\ketbra{0}{0},2^{-\beta H}\otimes 2^{-\beta H_B(w)})$. Assuming that we invest work $w$ per copy, the slowest decay rate is given by (here a negative supremum should be understood as $R^*(w)=0$)
\begin{equation}
\begin{split}
R^*(w)
 & = \inf\setbuild{R\in\mathbb{R}_{\ge 0}}{\forall\alpha>1:2^{-(1-\alpha)\beta w}f_\alpha(\rho,2^{-\beta H})\ge 2^{-\alpha R}f_\alpha(\rho',2^{-\beta H})}  \\
 & = \inf\setbuild{R\in\mathbb{R}_{\ge 0}}{\forall\alpha>1:(\alpha-1)\beta w+\log f_\alpha(\rho,2^{-\beta H})\ge -\alpha R+\log f_\alpha(\rho',2^{-\beta H})}  \\
 & = \inf\setbuild{R\in\mathbb{R}_{\ge 0}}{\forall\alpha>1:\alpha R\ge \log f_\alpha(\rho',2^{-\beta H})-\log f_\alpha(\rho,2^{-\beta H})-(\alpha-1)\beta w}  \\
 & = \sup_{\alpha>1}\frac{1}{\alpha}\log f_\alpha(\rho',2^{-\beta H})-\frac{1}{\alpha}\log f_\alpha(\rho,2^{-\beta H})-\frac{\alpha-1}{\alpha}\beta w  \\
 & = \beta\sup_{\alpha>1}\frac{\alpha-1}{\alpha}\left[\frac{1}{\beta}\sandwicheddivergence{\alpha}{\rho'}{2^{-\beta H}}-\frac{1}{\beta}\sandwicheddivergence{\alpha}{\rho}{2^{-\beta H}}-w\right].
\end{split}
\end{equation}
In the special case where the initial state is the Gibbs state, this expression simplifies to
\begin{equation}
R^*(w) = \max\left\{0,\sandwicheddivergence{\infty}{\rho'}{\gamma_{H,\beta}}-\beta w\right\}.
\end{equation}

\section*{Acknowledgement}

We thank Alexander M\"uller-Hermes for useful discussions. This work was supported by the \'UNKP-19-4 New National Excellence Program of the Ministry for Innovation and Technology and the Bolyai J\'anos Research Fellowship of the Hungarian Academy of Sciences. P.~Vrana acknowledges support from the Hungarian National Research, Development and Innovation Office (NKFIH) within the Quantum Technology National Excellence Program (Project Nr.~2017-1.2.1-NKP-2017-00001) and via the research grants K124152, KH129601.
A.~H.~Werner thanks the VILLUM FONDEN for its support with a Villum Young Investigator Grant (Grant No.~25452) and its support via the QMATH Centre of Excellence (Grant No.~10059).

\bibliography{refs}{}

\end{document}